    \pgfplotsset{compat=1.17}
\newcommand{\calG}{{\mathcal{G}}}
\newcommand{\calH}{{\mathcal{H}}}
\newcommand{\bb}{{{\bm \beta}}}
\renewcommand{\P}{\mathbb{P}}
\newcommand{\R}{\mathbb{R}}
\newcommand{\gstep}{(G_t \rightarrow G_{t+1})}
\newcommand{\ie}{{\em i.e.}, }
\newcommand{\eg}{{\em e.g.}, }
\newcommand{\vrg}[1]{{``#1''}}
\newtheorem{thm}{Theorem}
\title{Generating synthetic power grids\\using exponential random graphs models}
\author[1]{Francesco Giacomarra\thanks{This study was carried out within the PNRR research activities of the consortium iNEST funded by the European Union Next-GenerationEU (PNRR, Missione 4 Componente 2, Investimento 1.5 – D.D. 1058 23/06/2022, ECS\_00000043).}}
\author[2]{Gianmarco Bet}
\author[3]{Alessandro Zocca}
\affil[1]{Department of Mathematics, Informatics and Geosciences, University of Trieste}
\affil[2]{Department of Mathematics and Computer Science \vrg{Ulisse Dini}, University of Florence}
\affil[3]{Department of Mathematics, Vrije Universiteit Amsterdam}
\date{\today}
\begin{document}
\maketitle

\begin{abstract}
Synthetic power grids enable secure, real-world energy system simulations and are crucial for algorithm testing, resilience assessment, and policy formulation. We propose a novel method for the generation of synthetic transmission power grids using Exponential Random Graph (ERG) models. Our two main contributions are: (1) the formulation of an ERG model tailored specifically for capturing the topological nuances of power grids, and (2) a general procedure for estimating the parameters of such a model conditioned on working with connected graphs. From a modeling perspective, we identify the edge counts per bus type and $k$-triangles as crucial topological characteristics for synthetic power grid generation. From a technical perspective, we develop a rigorous methodology to estimate the parameters of an ERG constrained to the space of connected graphs. The proposed model is flexible, easy to implement, and successfully captures the desired topological properties of power grids.
\end{abstract}

\section{Introduction}
Power grids are fundamental infrastructures of our modern societies and economies. A thorough understanding of the principles that govern the formation of these networks is crucial to guarantee their reliability at all times. However, network operators release only limited information on transmission power grids due to security concerns. Therefore, there is a substantial lack of real high-quality data.
To address this problem, over the last two decades, synthetic grid generation approaches have been extensively investigated by the research community. The main challenge has been to develop models flexible enough to replicate the very heterogeneous nature and peculiar properties of real power grids. 

We propose a novel approach for the generation of synthetic power grids based on Exponential Random Graph (ERG) models. The main idea of an ERG is to consider a parametric probability density over the space of all graphs and tune its parameters to encode desirable topological properties as soft constraints. From a modeling perspective, ERGs have been quite successful due to their flexibility since they offer a very tractable alternative to the problem of sampling from involved graph subspaces. To the best of our knowledge, this class of models has not been considered yet in the power systems literature. This is probably due to the intrinsic difficulty of generating graphs that are both sparse and connected, both key topological properties of transmission power grids.


The main contributions of this paper are the following: first, we propose an ERG model that captures the main topological properties of real power grids. Second, we give a general procedure to estimate the parameters of a wide class of ERG models with constraints using an algorithm based on Markov chain Monte Carlo with noisy parameters, which we prove to converge to the set of parameters that satisfies the constraints imposed by the ERG model defined before. We present the results that we have obtained with our procedure, showing that the proposed model is flexible and captures the properties of possibly very different power grids while also being simple, easy to implement, and theoretically grounded. We remark that the proposed methodology is rather general and, except for the choices of graph statistics, is in fact not specific to synthetic power grid generation.

In the last two decades, various proposals have been put forward to address the challenge of generating synthetic yet realistic power grids. Some major commonalities among these works can be identified, based on the approach used to tackle the problem.
Similarly to the approach that we propose here, there have been several attempts to use already existing graph models either to directly generate power grid topologies or as a building block for more complex procedures. Examples of models used are the \vrg{small-world model} introduced by Watts and Strogatz in \cite{Watts1998} and refined by Wang \textit{et al.} into the \vrg{RT-nested small-world model} to generate synthetic power grids \cite{Wang2010,Wang2015,Wang2017,Wang2018}. Another model used is the Chung-Lu model \cite{Chung2002a,Chung2002b,Chung2003} used as the building block for the generation procedures presented in \cite{Aksoy2018,Boyaci2022}. Similarly, in~\cite{Scoglio2014} a variation of the Generalized Random Graph model~\cite{Molloy1995} is proposed as a generative model.

Many researchers argue that the geographical attributes of the area and/or the geographical locations of the nodes cannot be disregarded while generating synthetic power grids. Consequently, several models proposed in the literature put particular emphasis on the \textit{spatial embedding} of the synthetic grids. A straightforward way to generate spatially embedded grids that are both connected and sparse is to solve the \vrg{Minimum Spanning Tree} (MST) problem \cite{Nesteril2001} given the desired locations of the nodes and by assigning weights at each possible edge based on some distance/cost function. MSTs are often used as the first step to build the topology of several synthetic grid generation procedures \cite{Schultz2014,Soltan2016,Soltan2017,Rantaniemi2022}.

Other approaches in the literature to obtain spatially embedded synthetic grids rely on clustering of nodes based on the geographical properties of the nodes' locations \cite{Birchfield2017,Birchfield2017b, Espejo2019}. After the cluster identification, different procedures have been proposed to obtain the topologies with the desired properties considering as distinct the edges that connect nodes within the same cluster and those that connect nodes belonging to different clusters. 

The authors in~\cite{Halappanavar2015,Young2018,Young2018b} propose to view the power grids from the perspective of \vrg{network of networks}, \ie analyzing separately the subgraphs with the same voltage level in the grid, which they call fragments, and then the interconnections of these subgraphs as a new graph itself. This hierarchical view led to the development of the so-called \textit{Sustainable Data Evolution Technology} (SDET) tool to create open-access synthetic grid datasets~\cite{Huang2018}. Using this method, new synthetic topologies are generated by reassembling fragments of real grids that were previously anonymized (to avoid disclosure of sensitive information).
The idea of developing an anonymization procedure rather than a completely new generation method is also discussed in \cite{Vargas2021}, where the grid topologies of real grids are left unchanged and only the electrical parameters are randomized to obfuscate sensitive information.

Finally, in~\cite{Khoydar2019} a deep learning method for the generation of synthetic power grid is presented. To our knowledge, this is the first attempt to solve this kind of problem using deep learning techniques. This is because, generally, these approaches require a large amount of input data in order to provide accurate results.

The paper is structured as follows. In ~\cref{sec:topo} we introduce the graph-theoretical framework and discuss the main topological properties of power grids. In \cref{sec:erg}, we introduce the Exponential Random Graph model along with the proposed specifications. The estimation procedure and a theoretical explanation of its convergence are discussed in \cref{sec:parest}. The results obtained using our proposed model specifications are presented in \cref{sec:results}. Lastly, in \cref{sec:conclusions}, we offer final remarks and highlight potential avenues for future research.

\section{Transmission power grids as complex networks}
\label{sec:topo}
Power grids are interconnected networks that deliver electricity from producers to consumers, consisting of nodes called buses connected through links called power lines. We can distinguish two main types of power networks, namely, the distribution network and the transmission network. Distribution networks have shorter power lines (often referred to as distribution power lines) and serve the function of transporting electricity for short distances and low voltage levels. The transmission network is used to transport electricity over long distances working at high voltage levels, having longer power lines (also called transmission power lines). Coherently with other similar works in the literature \cite{Wang2008,Wang2010, Scoglio2014, Schultz2014, Soltan2016, Soltan2017, Birchfield2017, Birchfield2020, Elyas2015, Espejo2019, Huang2018, Sadeghian2020, Aksoy2018,Boyaci2022}, we focus solely on the transmission network,  and in this section we will highlight how to model such a network and its properties in a graph-theoretic framework.

\subsection{Model description and preliminaries}
\label{sub:model}
A high-voltage transmission network can be modeled as a simple, undirected, unweighted graph $G=(V,E)$, where the nodes $V=\{1,\dots,|V|\}$ represent electrical buses and the edges $E\subset V\times V$ represent the transmission lines connecting them. We denote by $n:=|V|$ the number of nodes and by $m:=|E|$ the number of edges. 
Any simple graph $G = (V,E)$ can be equivalently described by its \textit{adjacency matrix} $A = A(G) \in \{0,1\}^{n \times n}$, which is the square symmetric matrix defined as
\begin{equation}\label{eq:adjmat}
    A_{ij} = \begin{cases}
    1 & \mbox{if}\ (i,j) \in E \\
    0 & \mbox{otherwise}.
    \end{cases}
\end{equation}

We define the graph distance $d(i,j)$ between any two nodes $i\neq j$ as the length of the shortest path (in hops) between $i$ and $j$, with $d(i,i)=0$ and $d(i,j)=\infty$ if there are no paths between $i$ and $j$. If $d(i,j)<\infty$ for any pair of nodes $i \neq j$, then the graph is said to be \textit{connected}. The \textit{average path length} or \textit{characteristic path length} is the average length of the shortest path between any two nodes in the graph, i.e.,
\begin{equation}\label{eq:apl}
    \langle \ell \rangle = \frac{1}{n(n-1)} \sum_{i,j \in V} d(i, j).
\end{equation}

For every node $i \in V$, we define its \textit{degree} $k_i = \mathrm{deg}(i) \in \mathbb N$ as the number of nodes adjacent to $i$ in $G$. The degree $k_i$ of node $i\in V$ can be recovered as the sum of the $i$-th row of the adjacency matrix, namely $k_i = \sum_{j=1}^n A_{ij}$. The \textit{average node degree} $\langle k \rangle$ of the graph $G$ is
\begin{equation*}
    \langle k \rangle := \frac{1}{n} \sum_{i=1}^{n} k_i = \frac{2m}{n}.
\end{equation*}
The \textit{degree matrix} $D(G)$ of the graph $G$ is the square diagonal matrix defined as $D(G)=\mathrm{diag}(k_1,\dots,k_n)$. Another equivalent matrix representation of the graph $G$ is given by its \textit{Laplacian matrix} $L=L(G) \in \mathbb{R}^{n \times n}$, which is the square symmetric matrix defined as $L(G):= D(G)-A(G)$, or, equivalently, as
\begin{equation}\label{eq:laplacian}
L_{i,j} := \begin{cases}
    k_i & \mbox{if}\ i = j \\
    -1 & \mbox{if}\ (i,j) \in E \\
    0 & \mbox{otherwise}.
\end{cases}
\end{equation}
The Laplacian matrix is a useful graph-theoretical tool, and its spectrum can be linked to many properties of the corresponding graph, for a detailed overview, see \cite{Chung1997}. In particular, the second-smallest eigenvalue $\lambda_2$ of the Laplacian matrix, known as \textit{algebraic connectivity}, is closely related to the connectivity of the graph itself~\cite{Fiedler1989}. In particular, $\lambda_2 >0$ if and only if $G$ is a connected graph and its magnitude reflects how well connected $G$ is. Spectral graph theory has been shown to be an excellent tool for understanding the redistribution of power flows after line contingencies~\cite{Ronellenfitsch2017,Kaiser2021b,Zocca2021,Guo2021a,Guo2021b} and thus for designing more robust network topologies~\cite{Kaiser2021a,Lan2023}.

A \textit{triangle} is a clique of size $3$, that is, a subgraph consisting of three nodes that have an edge between each pair of nodes. The total number of triangles $t_1(G)$ in a graph $G$ can be calculated using its adjacency matrix as
\begin{equation}
    t_1(G) = \sum_{i=1}^n\sum_{j=1}^n \sum_{l=1}^n A_{ij} A_{jl} A_{li}.    
\end{equation}
More generally, for each $k \geq 1$ one defines \textit{$k$-triangles} as the subgraphs in which the $k$ triangles share the same edge; see \cref{fig:ktriangs} for some examples.
\begin{figure}[!h]
    \centering
    \begin{subfigure}[b]{0.25\textwidth}
    \includegraphics[width=0.99\textwidth]{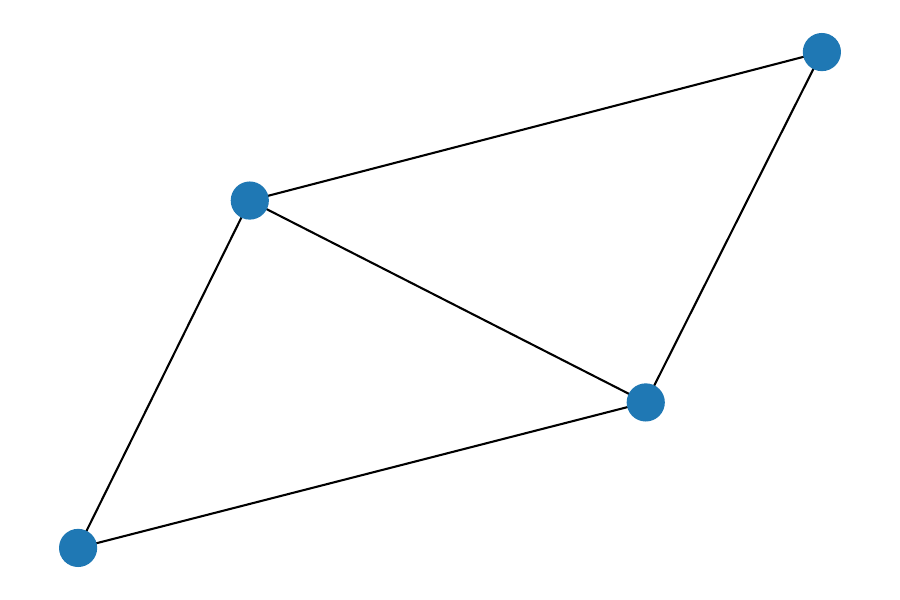}
    \caption{}
    \end{subfigure}
    \begin{subfigure}[b]{0.25\textwidth}
    \includegraphics[width=0.99\textwidth]{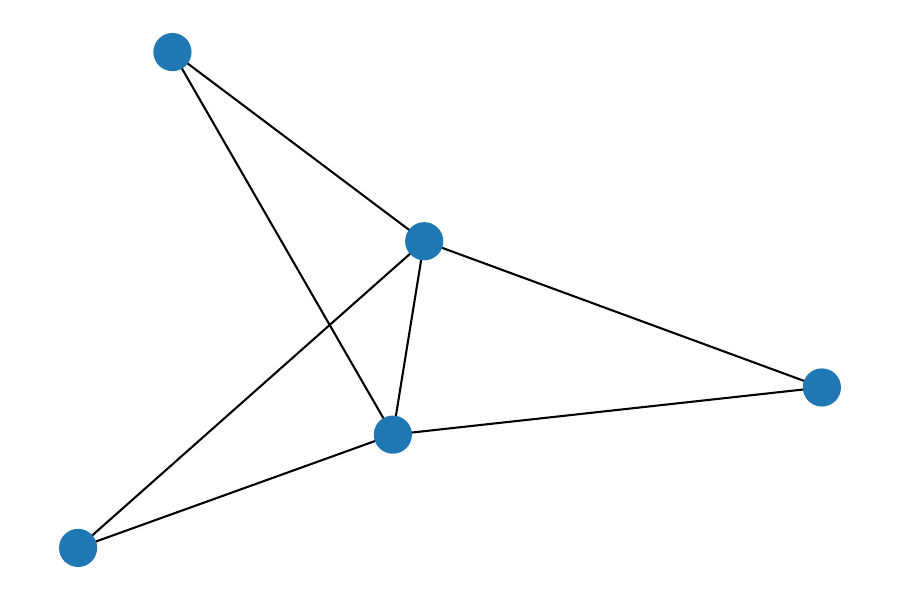}
    \caption{}
    \end{subfigure}
    \begin{subfigure}[b]{0.25\textwidth}
    \includegraphics[width=0.99\textwidth]{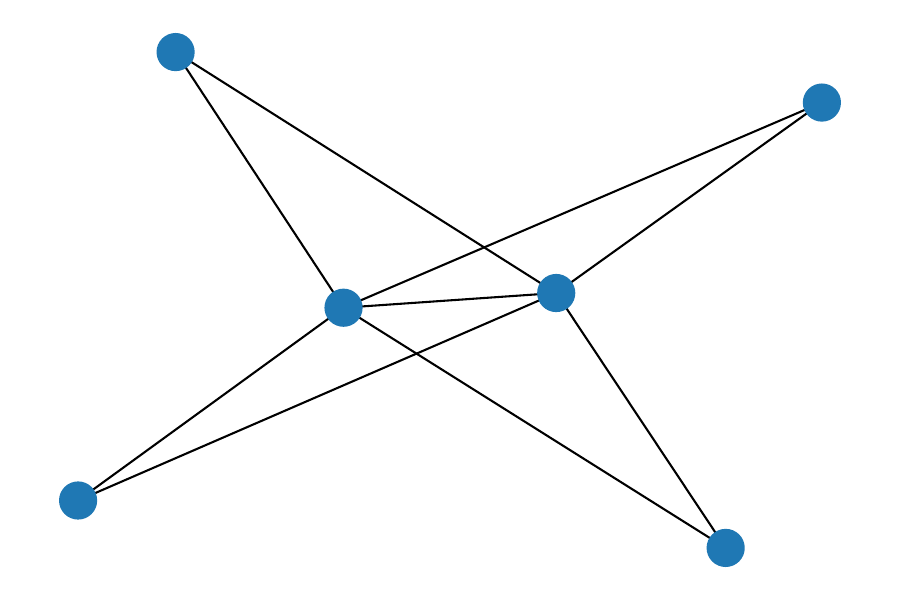}
    \caption{}
    \end{subfigure}
    \caption{Examples of (a) $2-$triangle, (b) $3-$triangle, and (c) $4-$triangle}
    \label{fig:ktriangs}
\end{figure}
We denote the total number of $k$-triangle in a graph $G$ as $t_k(G)$.

The \textit{local clustering coefficient} $C_i$ of node $i \in V$ is defined as the ratio between the number of triangles $\mathcal{T}_G(i)$ to which node $i$ belongs and the maximum number of triangles $t_G(i)$ that could possibly exist between node $i$ and its $k_i$ neighbors. Denoting by $N_i \subseteq V$ the \textit{neighborhood} of $i$, that is the collection of nodes adjacent to $i$ in $G$, the local clustering coefficient can be computed as
\begin{equation}\label{eq:localcdef}
    C_i = \frac{\mathcal{T}_G(i)}{t_G(i)} = \frac{|\{(j,k) \in E ~:~ j,k \in N_i\}|}{k_i(k_i-1)/2} = \begin{cases}\displaystyle \frac{1}{k_i(k_i-1)}\sum_{j=1}^n \sum_{l=1}^n A_{ij}A_{jl}A_{li} 
 & \mbox{if}\ k_i \geq 2,\\
    0  & \mbox{if}\ k_i < 2.
    \end{cases}
\end{equation}
The \textit{average clustering coefficient} $C$ is defined as the average of the local clustering coefficients, i.e.,
\begin{equation}\label{eq:globalc}
    C= \frac{1}{n} \sum_{i=1}^n C_i.
\end{equation}

In this work, we focus purely on the topology of synthetic grid graphs and thus do not consider the electrical properties of the nodes or of the lines. We refer the interested reader to~\cite{Wood2013}. However, it is key to distinguish the nodes based on the function of the corresponding substation, as this information is used in the generative procedure. In the context of power networks, we generally distinguish three types of nodes:
\begin{itemize}
    \item \textit{generator nodes}, which represent the network components where the electricity is produced, e.g., conventional power plants, wind parks, or solar panels.
    \item \textit{load nodes}, which represent the network components where electricity is consumed, for example, industrial districts, residential areas or distribution network feeders.
    \item \textit{interconnections nodes}, which represent intermediate substations or transformers.
\end{itemize}

In \cref{fig:tops} we show the topologies of two grids made available in \cite{Sogol2019}. The two grids, namely the \texttt{118 IEEE} and the \texttt{300 IEEE}, have respectively 118 and 300 nodes. For each grid, in addition to the topology, we highlight the bus-type assignment. More details on the grids used in our work are given in \cref{app:first}.

\begin{figure}[!ht]
    \centering
    \begin{subfigure}[b]{0.485\textwidth}
    \includegraphics[width=0.99\textwidth]{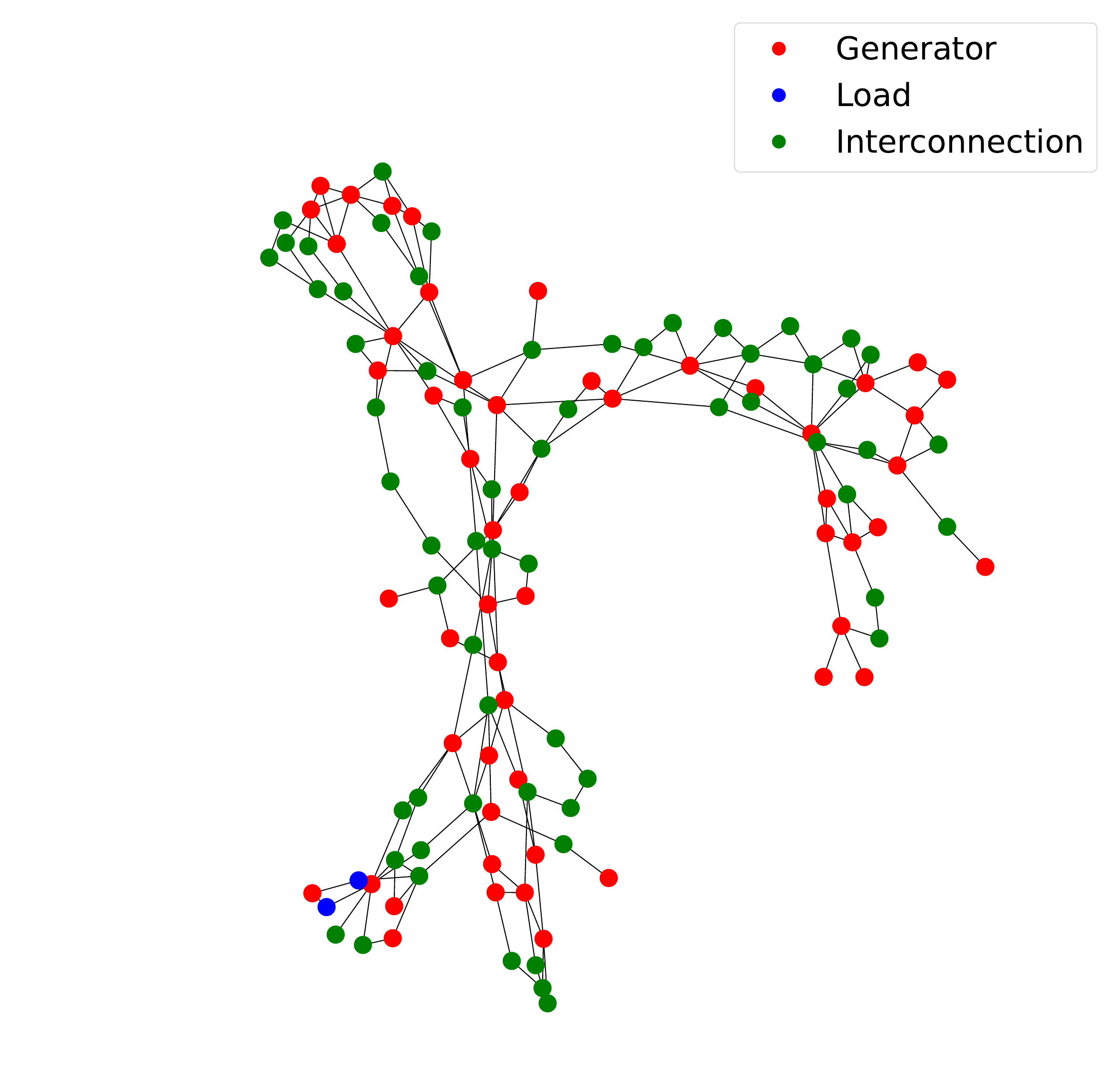}
    \caption{}\label{fig:orig}
    \end{subfigure}
    \begin{subfigure}[b]{0.485\textwidth}
    \includegraphics[width=0.99\textwidth]{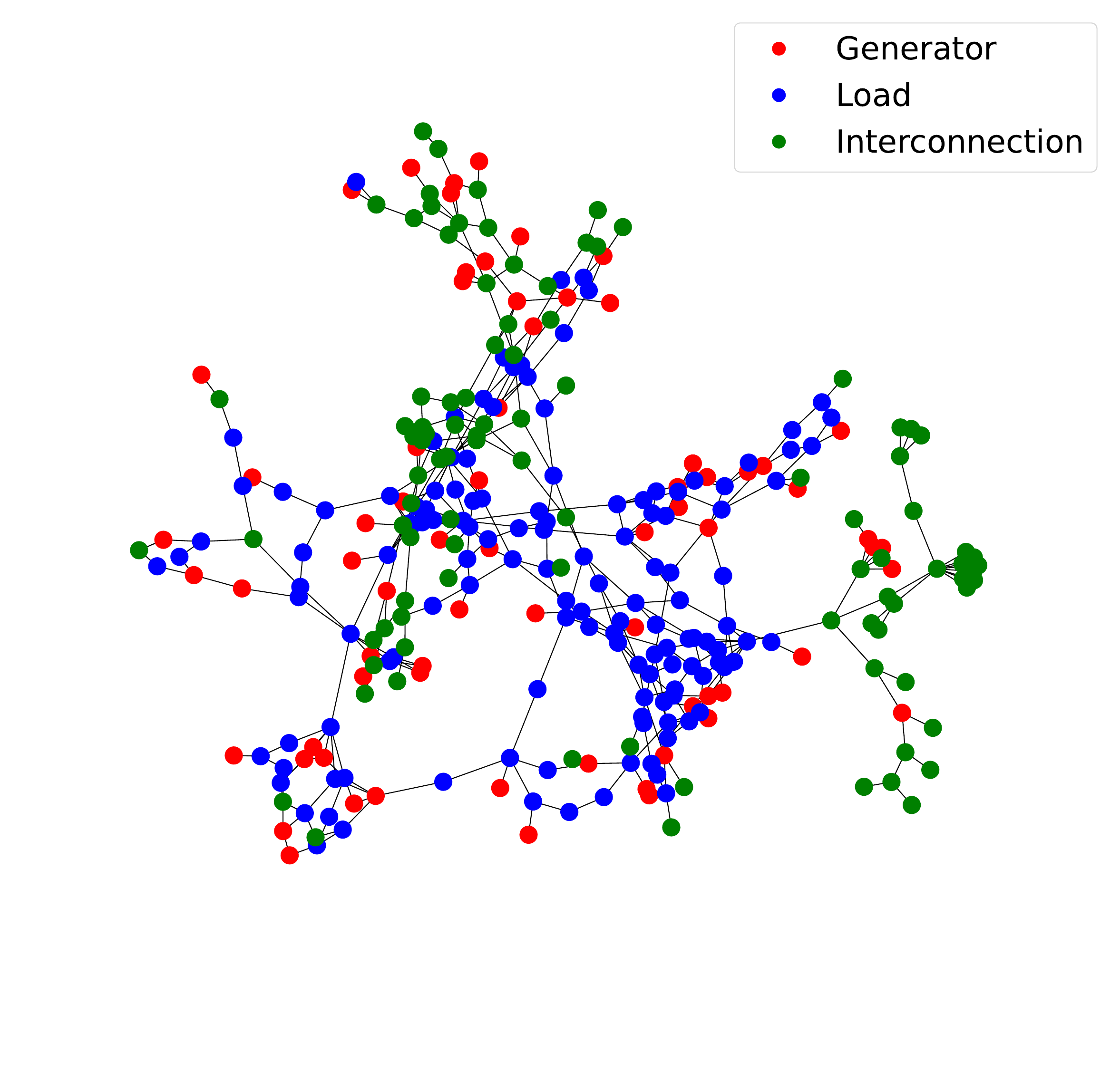}
    \caption{}\label{fig:gener}
    \end{subfigure}
    \caption{Visualization of the topology of the \texttt{118 IEEE} network (a), and the topology of the \texttt{300 IEEE} network (b). The node colors reflect the bus-type assignment. Since no geometric embedding was given, the nodes position were determined using a Force-directed graph drawing algorithm \cite{Fruchterman1991}.}
    \label{fig:tops}
\end{figure}
\FloatBarrier

\subsection{Topological properties of power grids}
\label{sub:properties}
A large body of literature, see, \eg \cite{Pagani2013, Birchfield2017, Birchfield2017b, Wang2018}, examined various topological properties of real power grids, which are instrumental in assessing the realism of synthetic grids. After an analysis of the available grids described and made available in \cite{Sogol2019}, we now revisit and discuss some of these key properties.

\begin{itemize}
    \item \textbf{Connectivity.} Power grids are fully connected graphs, which means there exists a path between any two nodes, except for rare emergency situations. This is due to the requirements for the reliability and security of the power grid system. Indeed, under normal operating conditions, electric power must be able to flow from any point on the grid to any other point. Most real power grids have an even stronger connectivity property, namely they are \textit{2-edge-connected graphs}. This property is, in fact, part of the standard contingency analysis known as \textit{$N-1$ security}, which ensures that the power grid can withstand the failure of any single component (line, transformer, generator, etc.) without losing the ability to supply power to all the remaining loads.
    
    \item \textbf{Average node degree and sparsity.} It has been shown that the \textit{average node degree} $\langle k \rangle$ in real power grids oscillates between $2$ and $3$ regardless of the network size~\cite{Wang2008,Pagani2013}. Therefore, power grids are sparse graphs, which means that the number of edges is of the same order of magnitude as the number of nodes, informally $|E| = \mathcal{O}(n)$. This can be intuitively explained by the high costs of building and maintaining transmission lines, as well as practical engineering constraints, such as the avoidance of transmission line crossings.

    \item \textbf{Clustering coefficient and total number of triangles.} The average clustering coefficient $C$ has been empirically observed to be much higher than that of other types of sparse graphs, which means that power grids tend to have many more triangles than other sparse graphs~\cite{Wang2018}. This could be due to the fact that a single line's removal should not disconnect any node from the others, and triangles are the simplest subgraph structure that allows for this property.
    We further investigate the clustering properties by looking at the number of $k$-triangles of the power grids in the publicly available dataset~\cite{Sogol2019}. Numerical evidence suggests that the number of triangles in power grids grows linearly with the number of edges; see~\cref{fig:edgetriang}. This explains the observed high average clustering coefficients (cf.~\cref{tab:allgrids} in the Appendix). On the other hand, the number of $2-$triangles (and consequently that of any $k-$triangle with $k\geq 2$) is roughly constant, in fact often very close to zero, and does not grow with the size of the network.    
    \begin{figure}[!h]
        \centering
        \includegraphics[width=0.9\textwidth]{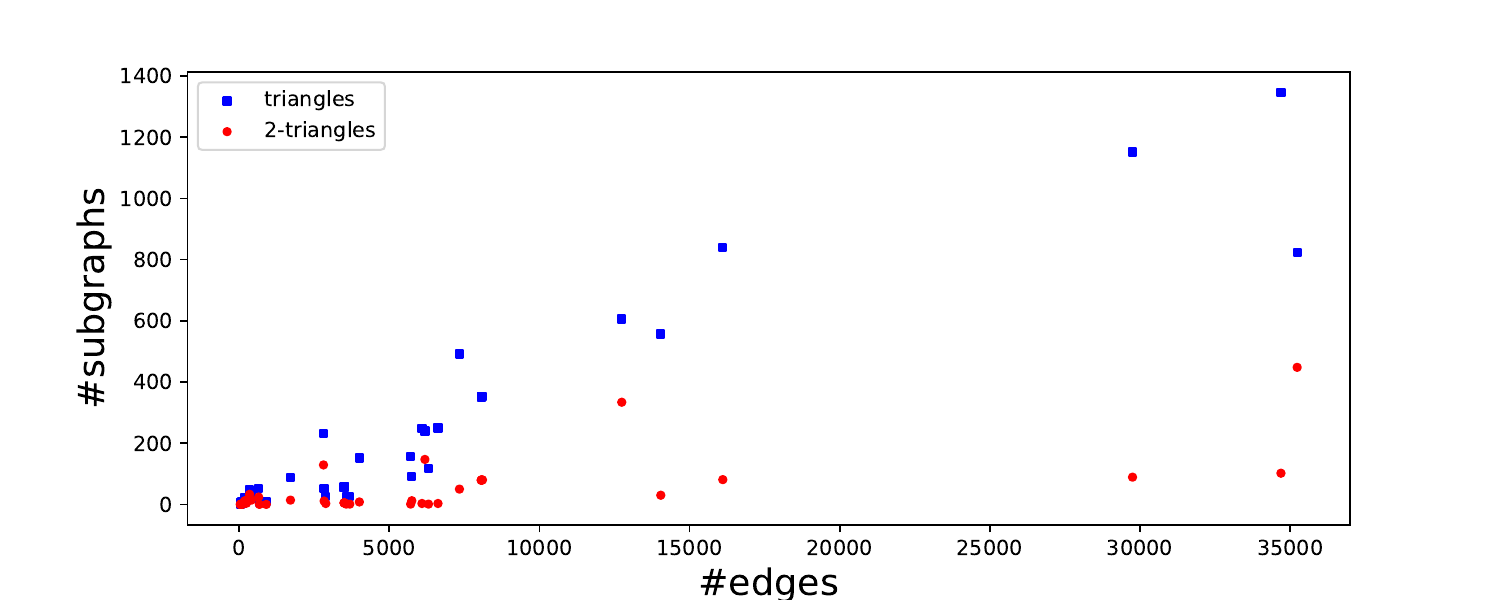}
        \caption{
        The number $t_1$ of triangles (in blue) and the number $t_2$ of $2-$triangles (in red) reported against the number of edges for various test networks (see \cref{tab:allgrids}) }
        \label{fig:edgetriang}
    \end{figure}

    \item \textbf{Bus-type assignment.} As reported in~\cite{Wang2015b}, in a typical power grid, 20-40\% are generation buses, 40-60\% are load buses, and about 20\% are interconnection buses. The authors also suggest that there exists a correlation between bus-type assignment and several network topology metrics.

    \item \textbf{Average degrees per node type} The degree distribution of the nodes in power grids has been shown to be different for the nodes of different types \cite{Wang2010}, motivating the introduction and the study of the average node degree per bus-type $\langle k_P \rangle $, $\langle k_L\rangle $, $\langle k_I \rangle $, where
    \begin{equation}\label{eq:avgtype}
        \langle k_{a} \rangle := \frac{1}{|a|} \sum_{i \in a} k_i, \quad a \in \{P,L,I\}.
    \end{equation}

    \item \textbf{Algebraic connectivity.} It has been shown \cite{Wang2010} that algebraic connectivity scales as a power of the network size $n$, \ie $\lambda \sim n^p$ with $p$ said to be in the range $ [ -1.376, -1.06]$. 

    \item \textbf{Average shortest path length.} Assuming power grids follow the \vrg{small-world} property as claimed by Albert and Barabasi \cite{Albert2002} (but there is no consensus in the literature, cf.~\cite{Pagani2013}), then the APL should grow proportionally to $\ln (m)/ \ln (\langle k \rangle)$.

    \item \textbf{Graph diameter}: the graph diameter $d_{\max}$ of real power grids has been shown to scale roughly as $\sqrt{n}$ \cite{Cotilla-Sanchez2012,Cloteaux2013,Aksoy2018,Young2018,Espejo2019}. 
\end{itemize}

All these quantities for the considered grids are reported in \cref{tab:allgrids} in \cref{app:first}.

\section{Synthetic grid generation using an Exponential Random Graph Model}\label{sec:erg}

In this section, we briefly discuss the probabilistic details of the Exponential Random Graph (ERG) model, as well as the rationale behind the proposed model formulation, while also stating how graphs with properties similar to the ones exhibited by the power grids have been modeled in the ERG literature.

\subsection{General ERG model formulation}
Let $\calG_n$ be the collection of all undirected, unweighted, simple graphs on $n$ nodes. We denote by $\mathbf{G}$ a random variable that takes values in $\calG_n$. An \textit{Exponential Random Graph} model \cite{Fronczak2012} is a probability distribution on  $\calG_n$ of the form
\begin{equation} \label{eq:ergm}
    \mathbb P_\bb(\mathbf{G}=G) = \frac{\exp(\calH_{\bb}(G))}{Z_\bb} = \frac{\exp \left(\sum_{i=1}^r \beta_i x_i(G)\right)}{Z_\bb}, \qquad G \in \calG_n,
\end{equation}
where $\bb \in \R^r$ is a vector of parameters, $x_i:\calG_n\mapsto\mathbb R$ are graph statistics of interest, also known as the \textit{observables of the model}, and $Z_\bb:=\sum_{G \in \calG_n} e^{\mathcal{H}_{\bb}(G)}$ is a normalizing constant. The function $\calH_{\bb}: \calG_n \rightarrow \mathbb{R}$ defined by $\calH_{\bb}(G) := \sum_{i=1}^r \beta_i x_i(G) $ is commonly called the \textit{Hamiltonian} of the model. 

By definition, the density \eqref{eq:ergm} is positive for all graphs $G\in\calG_n$. Intuitively, graphs $G$ with large values of $x_i(G)$ become less (resp.~more) likely if $\beta_i<0$ (resp.~$\beta_i>0$). By carefully choosing the parameters $\bb$, the expected values $(\mathbb E[x_i(\mathbf G)])_{i=1}^r$ of all the observables across the entire collection $\calG_n$ can be tuned. In fact, the ERG density in \eqref{eq:ergm} is the unique probability density $\P(G)$ over $\calG_n$ that maximizes the Shannon entropy $S(\P)$ defined as
\begin{equation}\label{eq:shannon}
    S = -\sum_{G \in \calG_n} \P(G) \ln \P(G),
\end{equation}
subject to the constraints
\begin{equation}\label{eq:mean_constraints}
    \mathbb E[x_i(\mathbf G)] = \overline{x}_{i},\qquad i=1,\ldots, r,
\end{equation}
where $\overline{x}_{i}$ is the desired average value of observable $i$. In other words, the distribution \eqref{eq:ergm} does not contain further structured information other than the constraints \eqref{eq:mean_constraints}.

Assuming the target values $\overline{x}_i$, $i=1,\dots,r$, of the chosen $r$ observables are given, one can tune the parameters $\bb$ by solving the following system of equations, obtained by rewriting the constraints~\eqref{eq:mean_constraints} using~\eqref{eq:ergm}
\begin{equation}\label{eq:params}
    \overline{x}_i = \mathbb E[x_i(\mathbf G)] = \frac{1}{Z_\bb}\sum_{G \in \calG_n} x_i(G) \mathrm{e}^{\mathcal{H}_{\bb}(G)} = \frac{1}{Z_\bb} \frac{\partial}{\partial \beta_i}\sum_{G \in \calG_n} e^{\mathcal{H}_{\bb}(G)} 
     = \frac{1}{Z_\bb}\frac{\partial Z_\bb}{\partial \beta_i} = \frac{\partial F_\bb}{\partial \beta_i}, \qquad i=1,\dots,r,
\end{equation}
where $F_\bb := \ln Z_\bb$ is the so-called \textit{free-energy} of the model. However, this strategy requires having a closed-form expression for the partition function $Z_\bb$ or of the free energy $F_\bb$, which crucially depends on the choice for the Hamiltonian \eqref{eq:hamf} and is not available in most cases. 

\subsection{Proposed ERG models}
In this paper, we propose to use a synthetic grid procedure that samples random graphs using the probability distribution specified by an ERG model of the form~\eqref{eq:ergm}. 
Our choice of observables is driven by the considerations made in \cref{sub:properties}, and by a careful analysis of publicly available data from real grids collected and described in \cite{Sogol2019}.

To fully introduce the model, we first need some preliminary definitions and additional notation. Consider a generic undirected graph $G=(V,E) \in \calG_n$ with a fixed number $n$ of nodes. Consistent with~\cref{sub:model}, we further assume that each node is either a generator node, a load node, or an interconnection node. Denote by $P, L, I \subset V$ the three corresponding subsets of generator, load, and interconnection nodes, respectively, so that $V = P\cup L \cup I$. 
Given two types of nodes $a, b\in\{P,L,I\}$, we denote by $E_{ab}(G) \subset E$ the subset of edges in the graph $G$ that connect a node of type $a$ to one of type $b$. Since we are working with an undirected graph, we have $E_{ab}(G) = E_{ba}(G)$. In this way, we obtain a partition of the edge set $E$ as $E = E_{PP}(G) \cup E_{PL}(G) \cup E_{PI}(G) \cup E_{LL}(G) \cup E_{LI}(G) \cup E_{II}(G)$.

For each pair of node types $a, b\in\{P,L,I\}$, we denote the cardinality of the corresponding edge subset by $e_{ab}(G):=|E_{ab}(G)|$. Including these six edge counts as observables in the Hamiltonian allows us to simultaneously tune the average edge density and the average degree of the typical vertex of each bus type. However, as we will illustrate later in \cref{sub:ablation}, the ERG model obtained by considering only these six edge counts as graph observables does not perform well, as the sampled graphs do not have many of the desired properties mentioned in~\cref{sub:properties}. In particular, this simple ERG model is unable to realistically replicate the clustering structure of power grids.

To overcome this limitation and capture the clustering properties of real power grids, we thus consider a more involved Hamiltonian that also includes the number of triangles $t_1$, as previously suggested by \cite{Strauss1986}. Recall from~\cref{sub:model} that the average clustering coefficient is defined precisely using the number of triangles. However, the resulting Hamiltonian has been proven to lead to degeneracy, \ie the resulting ERG density assigns most of the probability mass either on (nearly) fully connected graphs or only on random bipartite graphs, see \cite{Chatterjee2013}, depending on the sign of the parameter associated to the triangle count.

Snijders \textit{et al.}~\cite{Snijders2006} have introduced a new class of models that exhibit the desired clustering coefficient and are not prone to degeneracy: the main idea is to include a more elaborate function of the number $t_k$ of $k$-triangles with $k\geq 1$ rather than just the number $t_1$ of triangles. 
Specifically, we consider the so-called \textit{alternating $k$-triangles statistic} of a graph $G$ introduced in~\cite{Snijders2006}, which is defined as
\begin{equation}\label{eq:altk}
    u_\zeta (G) = 3t_1 (G) - \frac{t_2(G)}{\zeta} + \frac{t_3(G)}{\zeta^2} - \frac{t_4(G)}{\zeta^3} + \dots + (-1)^{n-3}\frac{t_{n-2}(G)}{\zeta^{n-3}}.
\end{equation}
Intuitively, the alternating signs and decreasing weights in $u_\zeta(G)$ compensate each other, leading to highly clustered graphs that are not (nearly) fully connected, or nearly empty.

In order to develop a new ERG model to generate synthetic power grids, some practical considerations are in order. Indeed, for most of the power grid topologies that are publicly available, the number of triangles $t_1$ increases linearly in the number of edges, while the number $t_2$ of $2-$triangles is essentially constant, see \cref{fig:edgetriang}. More generally, we have found in our analysis that the number of $k-$triangles is close to $0$ for all power grids when $k>2$. For this reason, rather than the involved alternating $k$-triangle statistic $u_\zeta (G)$, we chose to include in the ERG Hamiltonian two independent terms for the number of $1$-triangles and $2$-triangles (with two independent parameters $\beta_{1t}$ and $\beta_{2t}$).
The rationale behind this choice is the following: by ignoring all $k-$triangles counts for $k > 2$, we enormously reduced the computational effort required by the model, while still maintaining the mitigating effect of the alternating signs of the parameters by imposing $\beta_{1t}\cdot \beta_{2t}\leq 0$, following \cite{Snijders2006}. The resulting Hamiltonian thus has $r=8$ terms and reads
\begin{equation}\label{eq:hamf}
    \mathcal{H}_{\bb}(G) = \beta_{PP}e_{PP}(G) + \beta_{PL}e_{PL}(G) + \beta_{PI}e_{PI} + \beta_{LL}e_{LL}(G) +  \beta_{LI}e_{LI}(G) + \beta_{II}e_{II}(G) + \beta_{1t} t_1(G) + \beta_{2t} t_2(G).
\end{equation}
%

This proposed ERG model assumes that the target values for these eight observables are known. To work with realistic target values, in the rest of the paper we adopt the following strategy: we consider a publicly available power grid $G_0$ with $n$ nodes as a reference graph and use it to compute the target values as 
\begin{equation}
\overline{x}_i = x_i(G_0), \qquad i=1,\dots,r.
\end{equation}

It is worth mentioning that, given the fact that the observables related to the edges in the Hamiltonian are negative, we expect the number of edges in the graphs generated with the model \eqref{eq:hamf} to be on average directly proportional to the number of nodes in the graph, which is consistent with what has been observed for real power grids, as stated in \cref{sec:topo}.

The addition of the $1$-triangle and $2$-triangle terms in the Hamiltonian results in a partition function $Z_\bb$ that has no closed-form expression, hence making it impossible to solve the systems of equations~\eqref{eq:params} algebraically. Moreover, the collection $\calG_n$ on which we have defined all ERG models so far contains many disconnected graphs, which are not of interest when modeling power grid topologies, as we argued in~\cref{sub:properties}. However, it is not possible to explicitly take this requirement into account in the Hamiltonian, as there is no simple algebraic expression of the adjacency matrix that can capture the connectivity of the graph. Even if we were able to find a suitable proxy for connectivity to add as observable to the Hamiltonian, one should not forget that the ERG density imposes only \textit{soft constraints} based on the observables. However, we want the connectedness of the graph to be a \textit{hard constraint}, as the goal is to sample \textit{connected} synthetic power grids. To accommodate this, we restrict the ERG model to the subsets of the \textit{connected graphs} with $n$ nodes, which we denote as $\calG_{n,\text{conn}} \subset \calG_n$. This is equivalent to sampling from the same ERG model in \eqref{eq:hamf} but \textit{conditional on} on the graph being connected. 

The density of the ERG model obtained when restricting to a general subset $\mathcal{S} \subset \calG_{n}$ is
\begin{equation} \label{eq:ergm_const}
    \mathbb P_{\bb_{\mathcal{S}}}(\mathbf{G}=G) = \frac{\exp(\calH_{\bb_{\mathcal{S}}}(G))}{Z_{\bb_{\mathcal{S}}}} = \frac{\exp \left(\sum_{i=1}^r \beta_i x_i(G)\right)}{Z_{\bb_{\mathcal{S}}}}, \qquad G \in \mathcal{S},
\end{equation}
where  $Z_{\bb_{\mathcal{S}}}:=\sum_{G \in \mathcal{S}} e^{\mathcal{H}_{\bb}(G)}$. We denote by $\bb_{\mathcal{S}}=(\beta_1,\dots,\beta_r)$ the new different set of parameters that the ERG model needs to express the target average for each graph observable. In fact, when we restrict ourselves to a general subset $\mathcal{S} \subset \calG_{n}$, the system of equations that determine the parameters \eqref{eq:params} also changes. In particular, we have
\begin{equation}\label{eq:params_const}
    \overline{x}_i = \mathbb E[x_i(\mathbf G)] = \frac{1}{Z_{\bb_{\mathcal{S}}}}\sum_{G \in \mathcal{S}} x_i(G) \mathrm{e}^{\mathcal{H}_{\bb}(G)} = \frac{1}{{Z_{\bb_{\mathcal{S}}}}} \frac{\partial}{\partial \beta_i}\sum_{G \in \mathcal{S}} e^{\mathcal{H}_{\bb}(G)} 
     = \frac{1}{Z_\bb}\frac{\partial {Z_{\bb_{\mathcal{S}}}}}{\partial \beta_i} = \frac{\partial F_{\bb_{\mathcal{S}}}}{\partial \beta_i}, \qquad i=1,\dots,r.
\end{equation}

In the special case, we are considering where $\mathcal{S} = \calG_{n, \text{conn}}$ and the Hamiltonian as in \eqref{eq:hamf}, it is not possible to retrieve analytically $Z_{\bb_{\mathcal{S}}}$. Therefore, a numerical estimate is required to obtain a set of parameters that satisfy \eqref{eq:params_const}. 

\section{Parameter estimation using the Equilibrium Expectation algorithm}
\label{sec:parest}
In this section, we introduce a new method that we used to estimate the parameters $\bb$ that satisfy \eqref{eq:params_const} for an ERG model formulation that includes a hard constraint as in \eqref{eq:ergm_const}.
Before doing so, we briefly discuss the literature on ERG parameter estimation and relate it to our scheme.

\subsection{Methods for ERG parameters estimation}
Parameter estimation for ERG models in the presence of an intractable partition function is still an open problem \cite{Fronczak2012}. The use of mean-fields techniques gives unreliable results in related models such as spin glasses \cite{Talagrand2003}, and has been shown to work for ERG models only for specific values of the parameters that make them almost Erd\H{o}s-R\'enyi models \cite{Chatterjee2010,Chatterjee2013}. 

Many parameter estimation approaches are based on Markov-Chain Monte Carlo (MCMC) methods. MCMC methods are widely used to sample from ERG models \cite{Snijders2002}, and more generally from any probability distribution $\pi$. The Metropolis-Hastings (MH) algorithm \cite{Metropolis1953} is an MCMC scheme that is particularly suitable when the target probability distribution $\pi$ is known up to a constant factor, such as in the case of a Gibbs distribution or the ERG model~\eqref{eq:ergm}. 

The MH algorithm produces a sequence of samples from a Markov chain $\{X_t\}_{t \in \mathbb{N}}$ with specific transition probabilities that are obtained as the results of two steps, a proposal step and a subsequent acceptance step, each characterized by a different distribution. The proposal distribution $\mathbb{T}$ specifies the conditional probability $\mathbb{T}(s,s')$ of going from state $s$ to state $s'$; the acceptance distribution $\mathbb{A}$ specifies the probability $\mathbb{A}(s,s')$ of accepting the proposed state $s'$ if the chain previously resided in $s$. The transition probability can thus be rewritten as
\begin{equation}\label{eq:propacc}
    \P_\bb(s,s') = \mathbb{T}(s,s') \mathbb{A}(s,s').
\end{equation}

We will now specify the MH algorithm that we consider to estimate the parameters of the ERG model. Consider a general ERG model of the form \eqref{eq:ergm} defined over a state space $\mathcal{S} \subseteq \calG_{n}$. The target probability density $\pi$ we want to sample from is the ERG density $\P_{\bb}$ given in~\eqref{eq:ergm}.

In the classical MH algorithm for ERG models \cite{Snijders2002}, only moves that prescribe the addition or removal of a single edge are allowed. In terms of the proposal distribution, this means that for every pair of graphs $G_t=(V,E_t),G_{t+1}=(V,E_{t+1}) \in \mathcal{S}$
\begin{equation}\label{eq:transmh}
   \mathbb{T}(G_t,G_{t+1})>0 \quad \Longleftrightarrow \quad |E_t \bigtriangleup E_{t+1} |\leq 1.
\end{equation}
We consider the following simple proposal distribution $\mathbb{T}(G_t, G_{t+1})$. We randomly choose a pair of nodes uniformly $(i,j) \in V \times V$ and if $(i,j) \not\in E_t$, then we add the corresponding edge, obtaining a new set of edges $E_{t+1} = E_t \cup \{(i,j)\}$. Otherwise, we remove the selected edge and obtain a new graph with edge set $E_{t+1} = E_t \setminus \{(i,j)\}$. It is clear that this proposal distribution satisfies condition~\eqref{eq:transmh}.

Once a move is proposed, the acceptance probability is calculated using the desired target distribution \eqref{eq:ergm} by:
\begin{equation}\label{eq:accrulemh}
    \mathbb{A}_\bb(G_t,G_{t+1}) = \min \Big\{1, \frac{\P_{\bb}(G_{t+1})}{\P_{\bb}(G_t)} \Big\},
\end{equation}
with $\P_{\bb}$ being the ERG probability distribution defined in \eqref{eq:ergm}. Note that computing the acceptance probability does not require knowledge of the partition function, since 
\begin{equation}\label{eq:accmh2}
    \frac{\P_{\bb}(G_{t+1})}{\P_{\bb}(G_t)} =\frac{e^{\calH_{\bb}(G_{t+1})}}{e^{\calH_{\bb}(G_t)}},
\end{equation}
and, thus, it is possible to simplify the expression for the acceptance probability into
\begin{equation}\label{eq:ergacc}
    \mathbb{A}_\bb(G_t,G_{t+1})  = \min \Big\{1, e^{\calH_{\bb}(G_{t+1})-\calH_{\bb}(G_t)} \Big\}.
\end{equation}
We denote by $\{X^{(\bb)}_t\}_{t \in \mathbb{N}}$ the Markov chain on the state space $\mathcal{S}$ defined by the MH algorithm. Note that the steady-state distribution $\pi$ of $\{X^{(\bb)}_t\}_{t \in \mathbb{N}}$ exists and is unique since the chain is reversible. Furthermore, using \eqref{eq:ergacc} guarantees that $\pi$ corresponds to the ERG density \eqref{eq:ergm}.

A disadvantage of MCMC methods is that the chain must be close to stationarity to produce samples from the desired ERG distribution $\P_{\bb}$. In practical terms, this means that the chain needs to run for a sufficiently large number of steps, commonly referred to as the \textit{mixing time} of the chain, so that the empirical distribution of the chain is close to its steady state distribution.

One of the most widely used parameter estimation approaches based on MCMC is the so-called \textit{Markov Chain Monte-Carlo} \textit{Maximum Likelihood} introduced by Geyer~\cite{Geyer1991}. The idea behind this procedure is to use samples from an MCMC as the one defined above, with an arbitrarily chosen set of parameters $\bb_{0}$  to approximate the log-likelihood of the distribution $\mathcal{L}_{\bb}$ with an empirical one $\mathcal{L}_{N,\bb_{0}}$, where $N$ is the number of samples drawn from the steady state distribution of the chain, to retrieve the set of parameters that satisfy \eqref{eq:params}. This method theoretically guarantees the asymptotic convergence of $\mathcal{L}_{N,\bb_{0}}$ to $\mathcal{L}_{\bb}$ when the number of samples $N$ approaches infinity. However, convergence is slow if $\bb_0$ is too far from the target one. If this happens, the literature suggests reiterating the procedure many times using the endpoint of the previous iteration as the starting parameter for the next iteration. 
Although coming with asymptotic convergence guarantees, the Maximum Likelihood estimation through MCMC can be computationally unfeasible in our context, especially when the number of nodes is too large. At each iteration of the method with different parameters, a large number of samples must be taken from the steady-state distribution of the chain after it has reached the mixing time. This could be impractical for large $n$. For ERG models for dense graphs, the mixing time has been shown to be of the order $\mathcal{O}(n^2\log{n})$ \cite{Bhamidi2011}, however, currently there are no similar results for ERG models on sparse graphs, \ie ERG densities whose ensemble has an average number of edges $\langle m \rangle$ of the order $\mathcal{O}(n)$.

From a Bayesian perspective, there are many possible methods that have been proposed and used in the literature to tackle this issue (see, e.g.~\cite{Park2018}), but they are known to scale poorly as the number of nodes increases. 
Nevertheless, the Markov Chain Monte Carlo Maximum Likelihood method is widely used in the ERG community, being also the main method used in popular estimation libraries for ERGMs such as the \texttt{ERGM} package \cite{Hunter2008, krivitsky2022} for the \textbf{R} software \cite{Rcore2022}. This package estimates the parameters of the models using Maximum Likelihood estimation, which is approximated using either the MCMCMLE, which we described before, or the Maximum Pseudo-Likelihood estimation (often referred to as MPLE, first introduced in \cite{Besag1975}), or combination of the two.
However, using any of these approaches while working in the space of connected graphs $\calG_{n,\text{conn}}$ is highly nontrivial. Moreover, enforcing sparsity in the graphs while also imposing the hard constraint of connectivity could be computationally unfeasible for the aforementioned methods, motivating the need for new procedures for this specific problem.

\subsection{Estimation algorithm for constrained chains}

In view of the considerations above, we decided to resort to a variation of the \textit{Equilibrium Expectation} (EE) algorithm \cite{Byshkin2018} to estimate the parameters of our ERG model. The EE algorithm uses a modified Metropolis-Hastings Markov-Chain Monte Carlo model, as defined in \eqref{eq:propacc}, in which the parameters $\bb$ of the ERG model are adjusted dynamically. In contrast to other MCMC methods, it is based on the properties of the chain at equilibrium rather than on drawing a large number of samples from the chain itself, thus decreasing the computational burden.

The initial state $X_0 = G_0$ of the Markov chain associated with the EE algorithm must be drawn from the desired steady-state distribution to exploit the properties of the chain at equilibrium. In our setting, this requirement is satisfied by taking as an initial state the graph corresponding to the power grid used as a reference for the Hamiltonian \eqref{eq:hamf}.

At step $t$ of the chain, the parameters $\bb$ are updated according to some update rule if $t \equiv 0 \pmod{\theta}$, with $\theta$ being a user-defined variable that controls how often the update occurs. The update rule proposed in \cite{Byshkin2018} works as follows: Let $ \bb_{t} =(\beta^{t}_1, \dots, \beta^{t}_r)$ be the parameters associated with the observables $x_1(G),x_2(G),\dots,x_r(G)$ at time $t$. At step $t+1$, the parameters are updated as 
\begin{equation}\label{eq:uprule}
    \beta_{i}^{t+1} = \begin{cases}
    \beta_{i}^{t} + \alpha \cdot \max (|\beta_{i}^{t}|,c) \cdot \mathrm{sign}[x_i(G_0) - x_i(G_t)] & \mbox{if}\ t \equiv 0 \pmod{\theta} \\
    \beta_{i}^{t} & \mbox{otherwise},
\end{cases}
\end{equation}
where $\alpha>0$ is the learning rate and $c>0$ is a control parameter that ensures that the algorithm does not get stuck when $|\beta_{i}^{t}|\ll1$.

This approach alone does not allow for sampling from a constrained space such as $\calG_{n,\text{conn}}$. To overcome this, we combine the EE algorithm with the modified Metropolis-Hastings-type algorithm proposed by Grey \textit{et al.} in \cite{Grey2019}, designed specifically to sample from constrained sets of graphs. 

Consider a nonempty connected subset $\mathcal{S} \subset \calG_n$ and a Hamiltonian $H_\bb$ with $r$ observables. We will prove that if the Markov chain $\{X_t\}_{t \in \mathbb{N}}$ on $\mathcal{S}$ defined by the EE algorithm (i) satisfies \eqref{eq:transmh} and
\begin{equation}\label{eq:transmod}
     \mathbb{T}(G,G')=0 \qquad \forall \, G \in \mathcal{S}, \forall \, G' \notin \mathcal{S},
\end{equation}
and (ii) uses the acceptance rule as in \eqref{eq:ergacc}, then as $t \to \infty$ the parameter process ${\bb}^t$ converges to the solution ${\bb}_{\mathcal{S}}$ of the equations
\begin{align}\label{eq:params2}
    \overline{x}_i(G_0) &= \frac{1}{Z}\sum_{G \in \mathcal{S}} x_i(G) \mathrm{e}^{\mathcal{H}_{\bb_{\mathcal{S}}}(G)}, \quad i\in\{1,\dots,r\}.
\end{align}

In our case, we take $\mathcal{S}$ to be the set of connected graphs of size $n$, that is, $\mathcal{S} = \calG_{n,\text{conn}}$. For~\eqref{eq:transmod} to hold, the Markov chain should be defined to have nonzero transition probabilities only between connected graphs. To this end, we choose the following proposal density $\mathbb{T}(G_t,G_{t+1})$: either an edge chosen uniformly from the set of missing edges $E \setminus E_t $ is added, or an edge chosen uniformly among the edges in $E_t$ that can be removed without disconnecting the graph is removed. The pseudocode is given in \cref{alg:final}. 

\begin{algorithm}[!ht]
\caption{Equilibrium Expectation for a constrained chain}\label{alg:final}
\begin{algorithmic}
\State \textbf{Input:} the starting graph $G_0 \in \mathcal{S}$; $\theta$; $\bb_0$ ; $T$; $\alpha$; $c$ 
\State Set $G_0:=G_0$ and $t=0$
\For{$t=1, \dots, T$} 
\State Sample a random pair $i,j, i \neq j$ uniformly at random from $1,\dots,n$ 
\If{$(i,j) \in E_{t-1}$}
    \State Remove the edge $(i,j)$ and consider a new edge set $E_t = E_{t-1}\setminus (i,j)$
    \If{$G_t \in \mathcal{S}$}
        \State Accept $G_t$ with probability $\min \Big\{1,\P_{\bb}(G_t)/\P_{\bb}(G_{t-1}) \Big\}$ 
    \Else
        \State Reject $G_t$
        \EndIf
\Else
    \State Add the edge $(i,j)$ and consider a new edge set $E_t = E_{t-1} \cup (i,j)$
    \State Accept $G_t$ with probability $\min \Big\{1,\P_{\bb}(G_t)/\P_{\bb}(G_{t-1}) \Big\}$
\EndIf
\State Update $\bb_t$ according to rule \eqref{eq:uprule}
\EndFor
\State Compute the estimate $\displaystyle \bar{\bb} := \bar{\bb}(T) =  \frac{1}{T} \sum_{t=1}^{T} \bb_{t}$
\State \textbf{Output: $\bar{\bb}$}
\end{algorithmic}
\end{algorithm}

This algorithm takes as input a starting graph $G_0$ for the chain, the number of steps $\theta$ of the Markov chain after which the parameters are updated, the starting parameters $\bb_{0}$, the hyperparameters $\alpha$ and $c$ used for the update rule \eqref{eq:uprule} and the maximum number of iterations $T$.
The EE algorithm works as follows: each proposed move is either the addition of an edge or the removal of an existing edge which does not disconnect the graph. In either case, the move is accepted according to the standard Metropolis-Hastings acceptance rule.

After every $\theta$ transitions of $\{X_t\}_{t \in \mathbb{N}}$, each parameter $\beta_i$ is updated simultaneously using the update rule \eqref{eq:uprule}. The limit of the sequence $\bb_t$ generated by the EE algorithm does not depend on the initial values $\bb_0$. In fact, later in~\cref{thm:main_thm} we will show that the limit is unique. However, the choice of $\bb_0$ in general affects the convergence speed of the algorithm, but we did not investigate this numerically. The starting values of $\bb_0$ could be obtained by using, for instance, the Contrastive Divergence method, as suggested in \cite{Byshkin2018}.


The following theorem summarizes the results concerning the convergence of the method.
\begin{thm}\label{thm:ee}
Consider the coupled stochastic processes $(X_t, \bb_t)_{t \in \mathbb{N}}$ returned by \cref{alg:final}. Let $\{X_t\}_{t \in \mathbb{N}}$ be a Markov chain with transition probabilities defined as in \eqref{eq:propacc} on a non-empty and connected subset of graphs $\mathcal{S} \subseteq \calG_n$, with a proposal $\mathbb{T}$ as in \eqref{eq:transmh} and acceptance probability $\mathbb{A}_{\bb_t}$ as in \eqref{eq:ergacc}, and let $\{ \bb_t \}_{t \geq 0}$ be the $r$-dimensional stochastic process describing the evolution of the ERG parameters over time using the update rule \eqref{eq:uprule}. 
If the starting point $X_0$ is drawn from the steady-state distribution of $\{X_t\}_{t \in \mathbb{N}}$ and the learning rate $\alpha$ in the update rule is small enough, then for any $\bb_0$
\begin{equation}\label{eq:est_lim}
    \lim_{T \rightarrow \infty} \bar{\bb}(T) = \lim_{T \rightarrow \infty} \frac{1}{T} \sum_{t = 1}^{T} \bb_{t} = \bb_{\mathcal{S}}, 
\end{equation}
where $\bb_{\mathcal{S}}$ is the set of parameters satisfying \eqref{eq:params_const}. 
\end{thm}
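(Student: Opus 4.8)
The plan is to treat the pair $(X_t, \bb_t)_{t \in \mathbb{N}}$ as a stochastic approximation scheme in the sense of Robbins--Monro / Kushner--Clark and apply the ODE method. Concretely, I would first rewrite the update rule \eqref{eq:uprule} in the canonical stochastic approximation form: modulo the book-keeping constant $\theta$ (which can be absorbed by thinning time), we have $\beta_i^{t+1} = \beta_i^t + \alpha\, g_i(\bb_t, X_t)$ where $g_i(\bb, G) := \max(|\beta_i|, c)\cdot \mathrm{sign}[x_i(G_0) - x_i(G)]$. The key structural observation is that, for \emph{fixed} $\bb$, the chain $\{X^{(\bb)}_t\}$ is an irreducible, aperiodic, reversible Markov chain on the finite connected state space $\mathcal{S}$ whose unique stationary distribution is the constrained ERG density $\P_{\bb_{\mathcal{S}}}$ of \eqref{eq:ergm_const} (irreducibility on $\mathcal{S}$ is exactly what \eqref{eq:transmh} together with \eqref{eq:transmod} and the connectedness of $\mathcal{S}$ guarantee; reversibility and the identification of the stationary law follow from the Metropolis acceptance rule \eqref{eq:ergacc}, as already noted in the text). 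Hence, in the slow-parameter limit $\alpha \to 0$, the fast variable $X_t$ equilibrates and the averaged drift felt by $\bb_t$ is
\begin{equation}\label{eq:avgdrift}
    \bar g_i(\bb) := \mathbb{E}_{G \sim \P_{\bb_{\mathcal{S}}}}\big[\max(|\beta_i|,c)\cdot \mathrm{sign}[x_i(G_0) - x_i(G)]\big].
\end{equation}
Since $x_i(G_0) = \overline{x}_i$ is the target, a zero of $\bar g$ is a parameter vector $\bb$ for which $\mathbb{E}_{\P_{\bb_{\mathcal{S}}}}[\mathrm{sign}(\overline{x}_i - x_i(\mathbf G))] = 0$ for every $i$; I would then argue that this sign-condition forces $\mathbb{E}_{\P_{\bb_{\mathcal{S}}}}[x_i(\mathbf G)] = \overline{x}_i$, i.e. it pins down exactly the solution $\bb_{\mathcal{S}}$ of \eqref{eq:params_const}. (This last implication uses monotonicity of the ERG mean map in each coordinate direction together with convexity of the log-partition function $F_{\bb_{\mathcal{S}}}$; one shows a sign-balanced expectation cannot coexist with a biased mean.)

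The second block of the argument is the convergence of the ODE $\dot{\bb} = \bar g(\bb)$ to its equilibrium $\bb_{\mathcal{S}}$ and the transfer of that convergence back to the discrete scheme. For the ODE, I would exhibit a Lyapunov function: the natural candidate is $V(\bb) := F_{\bb_{\mathcal{S}}}(\bb) - \langle \bb, \overline{x}\rangle$ (a Legendre-type potential), which is strictly convex in $\bb$ because $\nabla^2 F_{\bb_{\mathcal{S}}} = \mathrm{Cov}_{\P_{\bb_{\mathcal{S}}}}(x(\mathbf G))$ is positive definite (the observables $e_{PP},\dots,e_{II},t_1,t_2$ are affinely independent as functions on $\mathcal{S}$, which should be checked), has $\bb_{\mathcal{S}}$ as its unique minimizer, and whose gradient $\nabla V(\bb) = \mathbb{E}_{\P_{\bb_{\mathcal{S}}}}[x(\mathbf G)] - \overline{x}$ is sign-aligned with $-\bar g(\bb)$ coordinatewise; hence $\frac{d}{dt} V(\bb(t)) = \langle \nabla V, \bar g\rangle \le 0$ with equality only at $\bb_{\mathcal{S}}$. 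Standard stochastic-approximation theory (Kushner--Clark / Borkar) then gives that the linearly interpolated trajectory of $\bb_t$ tracks the ODE and, provided the iterates remain in a bounded region (which I would establish from the $\max(|\beta_i|,c)$ structure: the drift points inward once $|\beta_i|$ is large, since the constrained-ERG mean of $x_i$ is monotone and bounded on the finite set $\mathcal{S}$, so no $\beta_i$ can run off to $\pm\infty$), the iterates converge a.s.\ to $\bb_{\mathcal{S}}$. The Cesàro average $\bar{\bb}(T)$ then converges to the same limit since the sequence itself does, yielding \eqref{eq:est_lim}; one may alternatively invoke Polyak--Ruppert averaging, which is the more natural match to the output $\bar{\bb}(T)$ of \cref{alg:final}.

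I would organize the write-up as: (1) a lemma recording that for fixed $\bb$ the constrained Metropolis chain has stationary law $\P_{\bb_{\mathcal{S}}}$ and is geometrically ergodic (finite state space), giving the mixing/averaging input; (2) a lemma that the ERG mean map $\bb \mapsto \mathbb{E}_{\P_{\bb_{\mathcal{S}}}}[x(\mathbf G)]$ is a diffeomorphism onto the interior of the convex hull of $\{x(G): G\in\mathcal{S}\}$, so \eqref{eq:params_const} has a unique solution $\bb_{\mathcal{S}}$ and the Lyapunov function is well-defined and strictly convex; (3) the stochastic-approximation/ODE argument with the Lyapunov function above, including the a.s.\ boundedness of the iterates; (4) passing to the Cesàro limit. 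The main obstacle I anticipate is step (3) in the form of the \emph{two-timescale / averaging justification}: the increments of $\bb_t$ depend on $X_t$, and $X_t$ is not sampled from $\P_{\bb_t, \mathcal S}$ but only drifts toward it, so one needs a quantitative control of the discrepancy between the empirical drift over a block of $\theta$ steps and the averaged drift $\bar g(\bb_t)$ — this is where the hypothesis that $X_0$ starts at stationarity and that $\alpha$ is small are genuinely used, and making the "small enough $\alpha$" quantitative (e.g.\ via a martingale decomposition of $g_i(\bb_t,X_t) - \bar g_i(\bb_t)$ plus a Poisson-equation bound on the correlation term, using the spectral gap of the fixed-$\bb$ chain) is the technically delicate part; a secondary subtlety is that $\bar g$ is only piecewise continuous (because of the $\mathrm{sign}$ and the $\max(|\beta_i|,c)$ kink), so the ODE should be read as a differential inclusion and the Lyapunov argument adapted accordingly.
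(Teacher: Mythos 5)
Your proposal takes a genuinely different route from the paper. The paper does not set up a stochastic-approximation/ODE argument at all: it first establishes irreducibility and aperiodicity of the constrained chain, then treats the time-varying parameters as \emph{noise} on the energy difference $\Delta$ and invokes the Ceperley--Dewing framework for Metropolis chains with noisy acceptance ratios, bounding $\sigma_\Delta^2$ by terms of order $\alpha^2$ so that the corrected acceptance probability $\exp(-\Delta-\sigma_\Delta^2/2)\wedge 1$ reduces to the usual Metropolis rule when $\alpha$ is small; the restriction to $\mathcal{S}$ is then handled by the conditioning identity of Grey \emph{et al.} The convergence of $\bb_t$ itself is largely delegated to cited numerical evidence. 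Your averaging/Lyapunov programme is, where it works, the more standard and more rigorous template for a statement of this type, and your items (1), (2) and the two-timescale caveat in (3) are well taken.

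There is, however, a genuine gap at the heart of your argument: the identification of the equilibria of the averaged drift. The zero set of
\begin{equation*}
\bar g_i(\bb) = \max(|\beta_i|,c)\,\mathbb{E}_{\P_{\bb_{\mathcal S}}}\bigl[\mathrm{sign}(\overline{x}_i - x_i(\mathbf G))\bigr]
\end{equation*}
is characterized by $\P_{\bb_{\mathcal S}}(x_i(\mathbf G)<\overline{x}_i)=\P_{\bb_{\mathcal S}}(x_i(\mathbf G)>\overline{x}_i)$, a \emph{median}-balance condition, not the \emph{mean} condition $\mathbb{E}[x_i(\mathbf G)]=\overline{x}_i$ of \eqref{eq:params_const}. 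Your claim that ``a sign-balanced expectation cannot coexist with a biased mean'' is false for skewed laws (most of the mass slightly below $\overline{x}_i$, a heavier tail above), and the integer-valued statistics here provide no symmetry to rule this out; convexity of the log-partition function controls the mean map, not the median map. The same defect breaks the Lyapunov descent: $\nabla V(\bb)=\mathbb{E}_{\P_{\bb_{\mathcal S}}}[x(\mathbf G)]-\overline{x}$ need not be coordinatewise sign-aligned with $-\bar g(\bb)$, so $\langle\nabla V,\bar g\rangle\le 0$ does not follow. To close the argument you would either have to replace the sign update by the raw increment $x_i(G_0)-x_i(G_t)$ (making the averaged drift a positive multiple of $-\nabla V$ exactly), or prove separately that for this family the median- and mean-matching parameters coincide --- neither is done, and the latter is unlikely to hold exactly. (In fairness, the paper's own proof does not close this step either; but your write-up presents the implication as routine, and it is precisely where the argument fails.)
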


\begin{proof}
In this proof, we will write $\P_{\bb}$ and $\mathbb{A}_{\bb}$ instead of $\P_{\bb_{t}}$ and $\mathbb{A}_{\bb_{t}}$, making the time dependency of the parameters $\bb$ implicit to keep the notation light.

First, we prove that the chain defined by the algorithm in the restricted state space $\mathcal{S}$ is irreducible and aperiodic. The key observation is that starting from any graph it is possible to reach the complete graph $K_n$ in a finite number of moves by adding all the missing edges one by one. These trajectories are possible in the subspace $\mathcal{S}$ since it is closed with respect to edge additions. Furthermore, since every edge addition can be ``reversed'', the corresponding symmetric edge removal is allowed in $\mathcal{S}$. For any pair of connected graphs $G, G' \in \mathcal{S}$, we can build a trajectory in $\mathcal{S}$ between them by first connecting $G$ to $K_n$ using the trajectory described above and then using the reverse trajectory from $K_n$ to $G'$. This property is independent of the choice of parameters $\bb_t$, therefore, the considered Markov chain $\{X_t\}_{t \in \mathbb{N}}$ is thus irreducible. The aperiodicity readily follows from the way we defined $\mathbb{A}_{\bb}$, which assigns a strictly positive probability to stay in one configuration for more than one step.

Our algorithm fits within the framework of Ceperley and Dewing \cite{Ceperley1999}, which we now briefly discuss. The goal of \cite{Ceperley1999} is to determine an explicit expression for the acceptance probability in MCMC in such a way that the algorithm still samples from the desired (stationary) distribution $\pi$ in the setting where the energy difference $\Delta$ between the current state and the next proposed state is perturbed by noise. This also fits our setting, we have $\Delta = \calH_{\bb_{t+1}}(G_{t+1}) - \calH_{\bb_{t}}(G_{t})$ and for each parameter $\beta_j$ we have that $(\beta_j^{t})$ for $t=0,1,2, \dots,T$ is a sequence of random variables. Note that each $\beta_j^{t}$ is adapted to the filtration generated by the initial condition and the acceptance choices up to time $t$. Consequently, the energy difference between graphs $G_t$ and $G_{t+1}$
\begin{align}
    \Delta = \Delta(G_{t},G_{t+1}) := \sum_{j=1}^r \beta_j^{t} (x_j(G_{t+1})-x_j(G_t))
\end{align}
is also random. More specifically, as stated in \cite{Ceperley1999}, a CLT argument suggests that for large enough $T$, for each $j$ the distribution of $\beta_j^t$ is approximately normally distributed with mean $\beta_j$ and variance $\alpha^2$. This is indeed confirmed by the numerical experiments with the update rule \eqref{eq:uprule} conducted in \cite{Borisenko2019}. Assuming that $(\beta_j^t)_{j=1}^r$ are jointly normally distributed (with a nontrivial covariance matrix), we have that
\begin{align}
\Delta \sim \mathcal N (\mu_\Delta, \sigma_\Delta^2),
\end{align}
where $\mu_\Delta := \sum_{j=1}^r\beta_j(x_j(G_{t+1})-x_j(G_t))$. The variance of $\Delta$ is 
\begin{equation} \label{eq:var2}
\begin{split}
\sigma_\Delta^2 := \text{Var}\Big(\sum_{j=1}^r \beta_j^{t} (x_j(G_{t+1})-x_j(G_t))\Big) = & \sum_{j=1}^r \text{Var}\Big(\beta_j^{t} (x_j(G_{t+1})-x_j(G_t))\Big) \\
 & + \sum_{i \neq j} \text{Cov}\Big(\beta_i^{t} (x_i(G_{t+1})-x_i(G_t)),  \beta_j^{t} (x_j(G_{t+1})-x_j(G_t))\Big).
\end{split}
\end{equation}

We aim now to find an upper bound on the variance. Let us consider the case in which the sum of the covariance terms in \eqref{eq:var2} is greater than $0$. Applying the Cauchy-Schwarz inequality, we have the following result:
\begin{equation}\label{eq:cov}
    \small \sum_{i \neq j} \text{Cov}\Big(\beta_i^{t} (x_i(G_{t+1})-x_i(G_t)),  \beta_j^{t} (x_j(G_{t+1})-x_j(G_t))\Big) \leq \sum_{i \neq j} \sqrt{\text{Var}\Big( \beta_i^{t} (x_i(G_{t+1})-x_i(G_t))\Big) \cdot \text{Var}\Big( \beta_j^{t} (x_j(G_{t+1})-x_j(G_t))\Big)}.
\end{equation}
By plugging \eqref{eq:cov} into \eqref{eq:var2} we obtain an upper bound that depends only on the variances of the single differences in the Hamiltonian \eqref{eq:var3}:
\begin{equation} \label{eq:var3}
\begin{split}
\text{Var}\Big(\sum_{j=1}^r \beta_j^{t} (x_j(G_{t+1})-x_j(G_t))\Big) \leq & \sum_{j=1}^r \text{Var}\Big(\beta_j^{t} (x_j(G_{t+1})-x_j(G_t))\Big) \\
 & + \sum_{i \neq j} \sqrt{\text{Var}\Big(\beta_i^{t} (x_i(G_{t+1})-x_i(G_t))\Big) \cdot \text{Var}\Big( \beta_j^{t} (x_j(G_{t+1})-x_j(G_t))\Big)}.
\end{split}
\end{equation}

Notice now that the terms $(x_j(G_{t+1})-x_j(G_t))$ are bounded for each $j$ and for every $t$ since they represent time differences in graph statistics whose maximum can be easily computed given the $j-$th statistic formula. 
We can write
\begin{equation}\label{eq:bound}
    |x_j(G_{t+1})-x_j(G_t)| \leq K_{\max}, \quad \forall j, \forall t .
\end{equation}
With the Hamiltonian defined in \eqref{eq:hamf}, we expect $K_{\max}$ to be the maximum of triangles that could be added/removed by the addition or removal of a single edge, which is of the order $\mathcal{O}(n)$.

Thus, we can substitute their contribution in the variance with the constant $K_{\text{max}}$ obtaining

\begin{equation} \label{eq:var4}
\text{Var}\Big(\sum_{j=1}^r \beta_j^{t} (x_j(G_{t+1})-x_j(G_t))\Big)  \leq \sum_{j=1}^r K_{\text{max}}^2\text{Var}\Big(\beta_j^{t}\Big) + \sum_{i \neq j} K_{\text{max}}^2\sqrt{\text{Var}\Big(\beta_i^{t} \Big) \cdot \text{Var}\Big( \beta_j^{t}\Big)}.
\end{equation}

Notice now that the upper bound for the variance depends solely on the variances of the parameters. With the update rule described in \eqref{eq:uprule} and \cref{alg:final}, the variances of the parameters, as stated in \cite{Borisenko2019}, can be assumed to be approximately equal to the square of the learning rate $\alpha$, leading to the following upper bound 
\begin{equation} \label{eq:varFinal}
\text{Var}\Big(\sum_{j=1}^r \beta_j^{t} (x_j(G_{t+1})-x_j(G_t))\Big) \leq \sum_{j=1}^r K_{\text{max}}\alpha^2 + \sum_{i \neq j} K_{\text{max}}\alpha^2.
\end{equation}
The authors in \cite{Ceperley1999} determine an acceptance probability that satisfies the \textit{average} detailed balance equation of the chain. In our setting, the average detailed balance equation is
\begin{equation}\label{eq:dbee}
    \pi(G_t)  \mathbb{A}_\bb(G_t, G_{t+1}) = \pi(G_{t+1}) \mathbb{A}_\bb(G_{t+1} \rightarrow G_t),
\end{equation}
where $G_t$ and $G_{t+1}$ differ by one edge. When $G_t$ and $G_{t+1}$ differ by more than one edge, the average detailed balance equations are always trivially satisfied. In \eqref{eq:dbee}, $\mathbb{A}_\bb \gstep$ is the \textit{average} acceptance probability of the transition from $G_t$ to $G_{t+1}$, where the expectation is taken with respect to the law of $(\beta_i^{t})_{i=1}^r$. The results in \cite{Ceperley1999} imply that, when $\Delta(G_t,G_{t+1})$ follows a normal distribution with \textit{known} variance $\sigma_\Delta^2$, the acceptance probability 
\begin{equation}\label{eq:solgraphs}
    \mathbb{A}_\bb (G_t, G_{t+1};\sigma_\Delta) =\exp{(-\Delta(G_t,G_{t+1})- \sigma_\Delta^2 / 2)} \wedge 1.
\end{equation}
solves \eqref{eq:dbee}. 
The issue with this is that, in general, $\sigma_\Delta$ in \eqref{eq:solgraphs} cannot be computed explicitly. However, given the upper bound for the variance described in \eqref{eq:varFinal}, if one chooses a small enough learning rate $\alpha$ (possibly at the cost of a larger $t$), the term $\sigma_\Delta^2$ can be neglected, leading to the usual acceptance probability
\begin{equation}\label{eq:solgraphs2}
    \mathbb{A}_{\bb} (G_t, G_{t+1}) = \exp{(-\Delta(G_t,G_{t+1}))} \wedge 1.
\end{equation}
With the acceptance probability defined by \eqref{eq:solgraphs2} and the constraint that every state of the chain trajectory should lie in the subspace $\mathcal{S}$ described before, we can follow the same steps as in \cite{Grey2019}. First, we have the following
\begin{equation}\label{eq:grey1}
    \frac{\P_{\bb}(G_{t+1})}{\P_{\bb}(G_t)} = \frac{\P_{\bb}(\mathbf{G} = G_{t+1} ~|~ \mathbf{G} \in \mathcal{S} )}{\P_{\bb}(\mathbf{G} = G_{t} ~|~ \mathbf{G} \in \mathcal{S} )} = \frac{\P_{\bb}(\mathbf{G} = G_{t+1}, \mathbf{G} \in \mathcal{S} )}{\P_{\bb}(\mathbf{G} = G_{t}, \mathbf{G} \in \mathcal{S} )} \times \frac{\P_{\bb}(\mathbf{G} \in \mathcal{S})}{\P_{\bb}(\mathbf{G} \in \mathcal{S})} = \frac{\P_{\bb}(\mathbf{G} = G_{t+1}, \mathbf{G} \in \mathcal{S} )}{\P_{\bb}(\mathbf{G} = G_{t}, \mathbf{G} \in \mathcal{S} )}.
\end{equation}
By definition of the moves in \cref{alg:final}, we have $\P_{\bb}(\mathbf{G} = G, \mathbf{G} \in \mathcal{S} ) = \P_{\bb}(\mathbf{G} = G)$ because the trajectory is always restricted to lie inside $\mathcal{S}$. Therefore, we rewrite \eqref{eq:grey1} as follows:
\begin{equation}\label{eq:grey2}
    \frac{\P_{\bb}(G_{t+1})}{\P_{\bb}(G_t)} =  \frac{\P_{\bb}(\mathbf{G} = G_{t+1})}{\P_{\bb}(\mathbf{G} = G_{t})},
\end{equation}
which in turn can be plugged in \eqref{eq:solgraphs2} leading to the following acceptance probability:
\begin{equation}\label{eq:solgraphs3}
     \mathbb{A}_\bb (G_t, G_{t+1}) = \min \Big\{ 1, \frac{\P_{\bb}(\mathbf{G} = G_{t+1})}{\P_{\bb}(\mathbf{G} = G_{t})})\Big\},
\end{equation}
which is the same acceptance probability as for the Metropolis-Hastings algorithm without constraint. By construction, this acceptance probability guarantees that $\pi(G) = \P_{\bb_{\mathcal{S}}}(\mathbf{G} = G ~|~ \mathbf{G} \in \mathcal{S})$, which coincides with the desired probability distribution \eqref{eq:ergm_const}. Hence, since the chain is irreducible and aperiodic for $t \rightarrow \infty$, \cref{alg:final} converges to the distribution of interest \eqref{eq:ergm_const} if the learning rate $\alpha$ is close to $0$ with the update rule as in \eqref{eq:uprule} and with the acceptance probability \eqref{eq:solgraphs3}.%
\end{proof}

After having estimated the set of parameters $\bar{\bb}$ using \cref{alg:final}, we can then use \cite[Algorithm 2]{Grey2019} to obtain an ensemble of graphs sampled from the probability distribution $\mathbb{P_{\bb_{\mathcal{S}}}}$ as in \eqref{eq:ergm_const}. This algorithm is a modified Metropolis-Hastings algorithm that generates only connected graphs; for completeness, we report the pseudocode in~\cref{alg:conn}. 

\begin{algorithm}[!ht]
\caption{Connected graph generation~\cite{Grey2019}}\label{alg:conn}
\begin{algorithmic}
\State \textbf{Input:} A connected graph $G_0 = (V,E_0)$, set of parameters $\bb$, maximum number of iterations $T$ 
\For{$t=1, \dots, T$} 
\State Generate a random pair $(i,j)$, $i\neq j$, $i,j \in V$
\If{$(i,j) \in E$}
    \State Remove the edge $(i,j)$ and consider $E_t = E_{t-1}\setminus\{(i,j)\}$, obtaining a new graph $G_t:=(V,E_t)$
    \If{$G_t$ is connected}
        \State With probability $\min \Big\{1,\P_{\bb}(G_t)/\P_{\bb}(G_{t-1}) \Big\}$ accept the graph $G_t$ and save it in graph output list $\mathrm{G}$
    \Else
        \State Reject $G_t$
        \EndIf
\Else
    \State Add the edge $(i,j)$ and consider $E_t = E_{t-1} \cup \{(i,j)\}$, obtaining a new graph $G_t:=(V,E_t)$
    \State With probability $\min \Big\{1,\P_{\bb}(G_t)/\P_{\bb}(G_{t-1}) \Big\}$ accept the graph $G_t$ and save it in graph output list $\mathrm{G}$
\EndIf

\EndFor
\State \textbf{Output:} Graph list $\mathrm{G}$
\end{algorithmic}
\end{algorithm}
\FloatBarrier

\section{Numerical results}
\label{sec:results}

In this section, we give details on the tuning of the proposed ERG model and present some numerical results. More specifically, in \cref{sub:tuning} we detailed our implementation of the proposed variant of the EE algorithm, while in \cref{sub:results} we report several statistics of the synthetic grids obtained with the proposed ERG-based procedure against the reference topology. Lastly, in \cref{sub:ablation} we compare the performance of our proposed ERG model with the simpler ERG model whose Hamiltonian does not have the $1$-triangle and $2$-triangle terms. All the experiments in this section have been performed on a laptop with an Intel(R) Core(TM) i7 CPU and 16 GB of RAM. 

\subsection{ERG parameter tuning}
\label{sub:tuning}
In this section, as an illustration, we briefly show how we tuned the parameters for an ERG model that aims to generate synthetic power grids with topological properties similar to a grid given in input. The grid taken as reference is the \texttt{300 IEEE} network, a medium-sized benchmark grid with 300 nodes and 409 edges (see \cref{tab:allgrids}).

In \cref{tab:hparams} we report the hyperparameters of \cref{alg:final} used for the parameter estimation of the ERG model with Hamiltonian as in \eqref{eq:hamf} and the topology of the \texttt{300 IEEE} network as reference for the values of the observables.

\begin{table}[H]
\centering
\begin{tabular}{c|c|c|c}
\toprule
$T$ & $\alpha$ & $c$ & $\theta$ \\
\midrule
20000000 & 0.001 & 0.001 & 100\\
\bottomrule
\end{tabular}
\caption[Hyperparameters for the \texttt{300 IEEE} network]{Hyperparameters used for the EE algorithm when using the \texttt{300 IEEE} network as reference grid.}
\label{tab:hparams}
\end{table}

Since we have no knowledge of the distribution \eqref{eq:ergm_const} before parameter estimation, in all our experiments we use as a starting point for the estimation algorithm \cref{alg:final} the topology of the reference grid $G_0$. Furthermore, the statement of \cref{thm:main_thm} holds asymptotically as $T \rightarrow \infty$, but in practice we need to resort to an approximation of \eqref{eq:est_lim}. If $T$ is large enough, we can estimate the target parameters as in \cite{Borisenko2019}, using
\begin{equation}
\label{eq:avcomp}
    \bar{\bb} =  \frac{1}{T - t_B} \sum_{t = t_{B+1}}^{T} \bb_{t},
\end{equation}
where $t_B$ is the so-called \textit{burn-in time}, \ie a time after which we say that the chain $\{\bb(t)\}_{t \in \mathbb{N}}$ is roughly independent of the initial condition $\bb_0$. 
%

For simplicity purposes, in all of our experiments, we pick the burn-in time in a heuristic way by setting $t_B = 0.75 \cdot T$, effectively keeping only the last quarter of the trajectories to estimate $\bar{\bb}$. 
\begin{figure}[!h]
    \centering
    \begin{subfigure}[b]{0.9\textwidth}
        \centering
        \includegraphics[width=\textwidth]{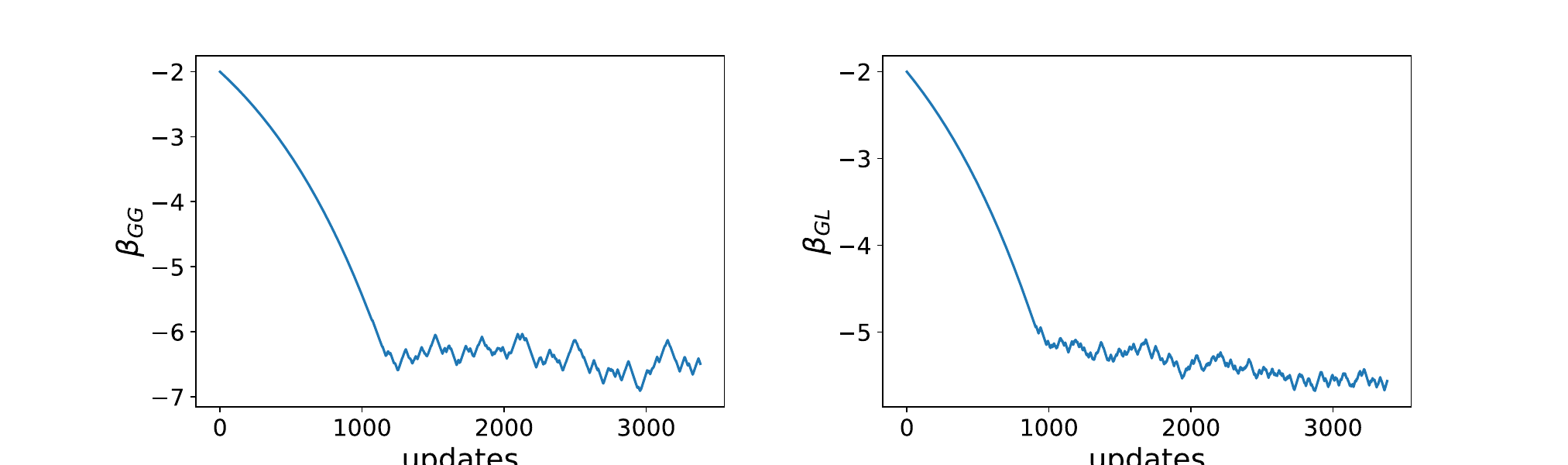}
        \caption{Evolution over time of the parameters $\beta_{PP}$ (left) and $\beta_{PL}$ (right). }\label{fig:traj1}
    \end{subfigure}
    \begin{subfigure}[b]{0.9\textwidth}
        \centering
        \includegraphics[width=\textwidth]{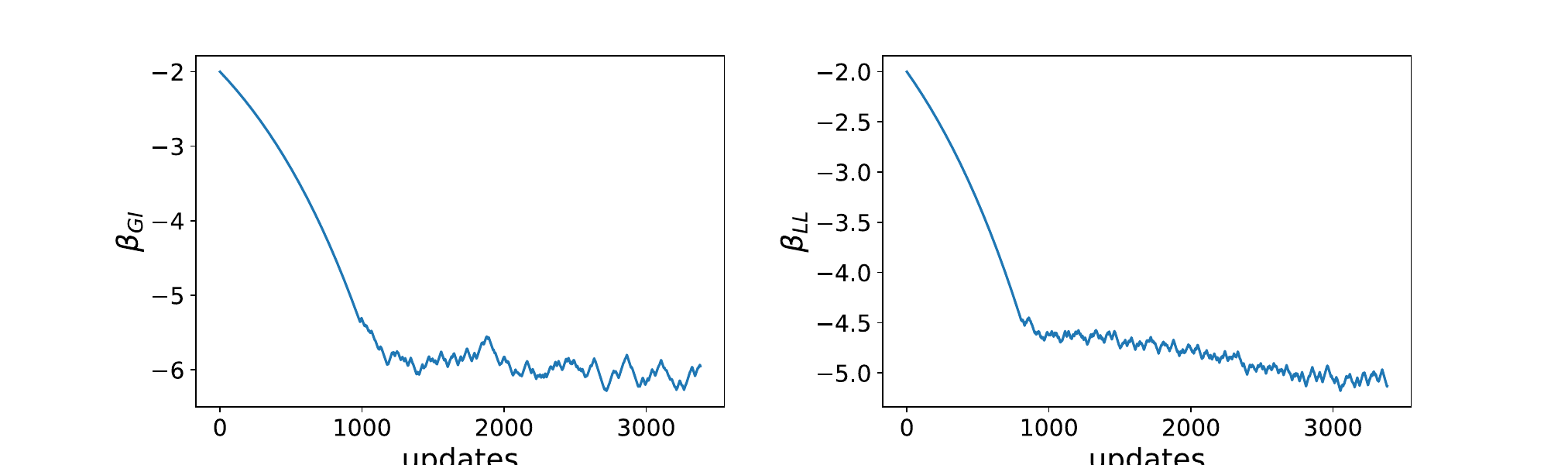}
        \caption{Evolution over time of the parameters $\beta_{PI}$ (left) and $\beta_{LL}$ (right).}\label{fig:traj2}
    \end{subfigure}
    \hfill
    \begin{subfigure}[b]{0.9\textwidth}
        \centering
        \includegraphics[width=\textwidth]{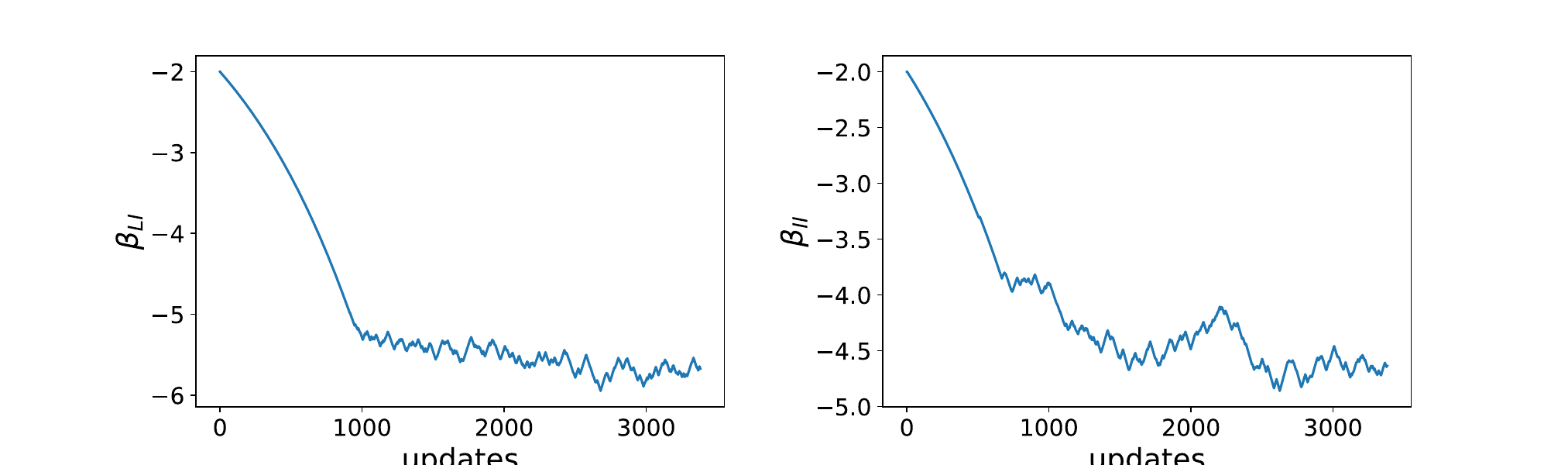}
        \caption{Evolution over time of the parameters $\beta_{LI}$ (left) and $\beta_{II}$ (right).}\label{fig:traj3}
    \end{subfigure}
    \begin{subfigure}[b]{0.9\textwidth}
        \centering
        \includegraphics[width=\textwidth]{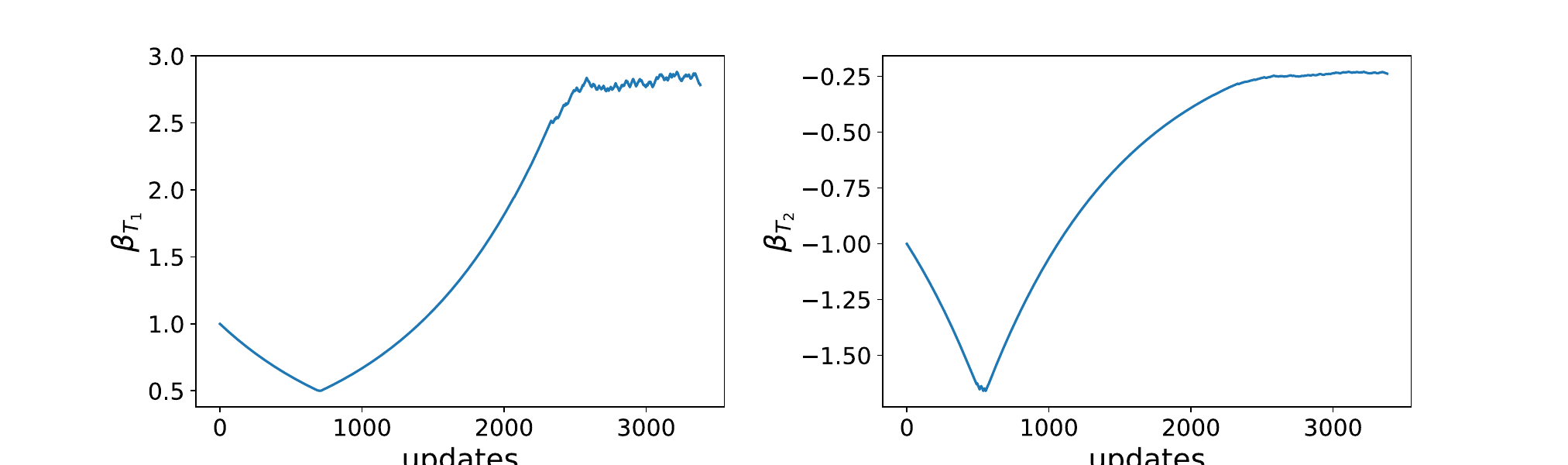}
        \caption{Evolution over time of the parameters $\beta_{1t}$ for the number of triangles (left) and $\beta_{2t}$ for the number of $2-$triangles (right).}\label{fig:traj4}
    \end{subfigure}
    \caption{Trajectories of the $\bb_{k \cdot \theta}, k \in \mathbb{N}$ process describing the ERG model parameters updates over time using the EE algorithm when using the \texttt{300 IEEE} network as the reference grid.}
    \label{fig:trajectories}
\end{figure}

\Cref{fig:trajectories} shows the trajectories of the various components of the vector $\bb_t$ obtained using \cref{alg:final}. For illustration purposes, we display the values only after each nontrivial update (that is, $\bb_{k \cdot \theta}, k \in \mathbb{N}$, cf.~\eqref{eq:uprule}), effectively displaying only the non-piecewise constant parts of the trajectories. For this experiment, the percentage of accepted moves is $\sim 1.5\%$, and the percentage of rejected proposals due to the connectivity constraint is $\sim 0.3\%$. 

Looking at \cref{fig:trajectories}, the burn-in time can be graphically interpreted as the number of update iterations $\bar{t}$ after which each parameter trajectory has passed the ``elbow point'' and starts to oscillate around a constant value.
\FloatBarrier

\subsection{Topological properties of generated synthetic grids}\label{sub:results}
We present here the results of our proposed sampling method on a few benchmark grids of different sizes. For each grid, we sample new graph topologies using the Markov chain described in \cref{alg:conn} with grid-specific parameters $\bar{\bb}$ tuned with \cref{alg:final}. 

To reduce undesirable correlations among the sampled graphs, we applied a standard chain thinning criterion to the list of graphs produced by \cref{alg:conn}. More specifically, we progressively scan this list of samples and retain only the graphs whose adjacency matrix is at a Hamming distance of at least $2n$ from the adjacency matrix of the last retained graph.

In \cref{tab:1.5} we compare the sampled graphs with the initial reference power grids $G_0$, with respect to various topological properties described in~\cref{sec:topo}. We report the average of each quantity over the collection of sampled graphs, as well as the corresponding standard deviation. The whole procedure, complete with parameter estimation and grid ensemble generation, took 25 minutes for the \texttt{118 IEEE} network, 70 minutes for the \texttt{300 IEEE} network, and 250 minutes for the \texttt{1354 PEGASE} network. 

\begin{table}[H]
\centering
\begin{tabular}{p{2.3cm}|c|c|c|c|c|c|c|c}
\toprule
Power grid $G_0$ & Metric type & $m$ & $\langle k_P \rangle $ & $\langle k_L \rangle $ & $\langle k_I \rangle $ & APL & $\lambda_2$ & $C$ \\ \midrule
{\texttt{118 IEEE}} & Actual & 179 & 3.56 & 2.5 & 3.25 & 6.3 & 0.027 & 0.165 \\
\hline
 & Sample avg. & 180 & 3.61 & 2.52 & 3.33 & 4.67 & 0.10 & 0.11 \\
 & Sample std. & (6.89) & (0.36) & (0.20) & (0.70) & (0.42) & (0.04) & (0.03) \\
\midrule
{\texttt{300 IEEE}} & Actual & 409 & 1.95 & 3.00 & 2.15 & 9.93 & 0.009 & 0.085 \\
\hline
 & Sample avg. & 413 & 1.93 & 3.06 & 2.16 & 6.44 & 0.049 & 0.076 \\
 & Sample std. & (10) & (0.13) & (0.17) & (0.24) & (0.39) & (0.019) & (0.015) \\
\midrule
\texttt{1354 PEGASE} & Actual & 1710 & 2.58 & 1.08 & 2.53 & 11.15 & 0.005 & 0.056 \\
\hline
 & Sample avg. & 1723 & 2.53 & 1.16 & 2.51 & 9.56 & 0.022 & 0.051  \\
 & Sample std. & (29) & (0.09) & (0.17) & (0.04) & (0.27) & (0.007) & (0.006) \\
\bottomrule
\end{tabular}
\caption[Comparison of synthetic grid statistics obtained from the ERG model with the corresponding actual values in the reference grid $G_0$.]{Comparison of synthetic grid statistics obtained from the ERG model with the corresponding actual values in the reference grid $G_0$. The reported graph observable are: $m$ the average number of edges, $\langle k_P \rangle $ generators' average degree, $\langle k_L \rangle $ loads' average degree, $\langle k_I \rangle $ interconnections' average degree, APL average shortest path length, $\lambda_2$ algebraic connectivity, and $C$ clustering coefficient. The averages and deviations were computed on an ensemble composed of 1100 samples, obtained after chain thinning.}
\label{tab:1.5}
\end{table}

All generated synthetic topologies are connected, have a realistic bus-type assignment as defined in \cite{Wang2015, Elyas2015}, and exhibit average values close to the ones of the grid of reference for all observables included in the Hamiltonian. Remarkably, even for topological properties that were not explicitly included in our Hamiltonian, \eg the average path length and algebraic connectivity, the generated synthetic grids are within the values regarded as realistic (see \cite{Pagani2013, Wang2017}). Thus, our results are comparable to those reported in works on synthetic grid generation done with reference grids of similar size \cite{Huang2018, Espejo2019, Sadeghian2020}.

\subsection{Ablation study}
\label{sub:ablation}
To show the crucial role that including the number of $1-$ and $2-$triangles as observables in the Hamiltonian plays, we present here a simpler ERG model that only accounts for the edge count and bus-type assignment. Specifically, we consider the following Hamiltonian 
\begin{equation}\label{eq:hams}
    \mathcal{H}_{\bb}(G) \ = \  \beta_{PP}|E_{PP}| + \beta_{PL}|E_{PL}| + \beta_{PI}|E_{PI}| + \beta_{LL}|E_{LL}| +  \beta_{LI}|E_{LI}| + \beta_{II}|E_{II}|,
\end{equation}
which is the same formulation as in \eqref{eq:hamf} but without the $1-$triangles' and $2-$triangles' counts as observables. This simpler Hamiltonian allows us to calculate the partition function $Z_\bb$ explicitly and thus derive in a close form the optimal parameters given the target observables. In fact, the partition function of this simpler model is

\begin{equation}\label{eq:simpz}
    Z_{\bb} = \prod_{i \in I} (1+ e^{\beta_i})^{M(E_i)}, \quad \text{ with } I =\{PP, PL, LI, LL, LI, II\}
\end{equation}
where $M(E_i)$ is a function that computes the maximum possible number of edges of type $i$. In order words, if $E_i=E_{ab}$ is the subset of edges that connect nodes of type $a$ with nodes of type $b$, then
\begin{equation}\label{eq:maxfunc}
    M(E_i)=M(E_{ab}) := \begin{cases}
        |a| \cdot |b| & \mbox{if } a \neq b \\
        |a| \cdot (|a|-1) & \mbox{if } a=b.
    \end{cases}
\end{equation}
Using~\eqref{eq:simpz}, we can then easily derive the following closed-form expression for the free energy:
\begin{equation}\label{eq:simpf}
    F_{\bb} = \log{Z_{\bb}} = \log{\prod_{i \in I} (1+ e^{\beta_i})^{M(E_i)}} = \sum_{i \in I} M(E_i)\log{(1+ e^{\beta_i})}.
\end{equation}
Deriving \eqref{eq:simpf} with respect to each $\beta_i$ we can then calibrate the parameters to match the target value as described in \eqref{eq:params}. The theoretical details of the derivations of \eqref{eq:simpz} and \eqref{eq:simpf} are presented in \cref{app:second}.

However, these close-form parameters do not account for the fact that we are interested only in sampling from the subspace of connected graphs $\mathcal{S}=\calG_{n,\text{conn}}$. Therefore, even for this simpler model, we need to estimate the parameters using the EE algorithm in \cref{alg:final}.

It is insightful to compare the parameters obtained using the closed-form expression of the free energy as in \eqref{eq:simpf} (hence, without taking into account the connectivity constraint) with the ones obtained using the EE algorithm in the subspace of connected graphs. In \cref{tab:parcomparison} we report the values of the different parameters for the two methods when $G_0$ is the topology of the \texttt{300 IEEE} network. Not surprisingly, enforcing connectivity encourages a less sparse solution, and thus the parameter values of the edge counts are lower to account for this. Furthermore, we have noticed that the number of edges in a specific edge subset seems to be correlated with the steepness of the decrease in the value of the associated parameter. Understanding exactly how hard constraints, such as connectivity, affect the parameters of an ERG density is, to the best of our knowledge, still an open problem.

\begin{table}[H]
\centering
\begin{tabular}{c|c|c|c|c}
\toprule
Observable & $G_0$ value &  Free-Energy $\beta$ & EE $\beta$ & Change (in \%) \\ 
\midrule
$e_{PP}$ & 8 & $-5.68$ & $-6.19$ & $-9.0\%$\\
$e_{PL}$ & 110 & $-4.84$ & $-5.22$ & $-7.7\%$ \\
$e_{PI}$ & 9 & $-5.33$ & $-5.83$ & $-9.4\%$ \\
$e_{LL}$ & 240 & $-4.45$ & $-4.60$ & $-3.4\%$ \\
$e_{LI}$ & 35 & $-5.05$ & $-5.38$ & $-6.5\%$ \\
$e_{II}$ & 7 & $-3.89$ & $-4.59$ & $-17.0\%$ \\
\bottomrule
\end{tabular}
\caption[Comparison of parameters for the simpler model obtained with different methods.]{Comparison of parameters for the simpler model in~\eqref{eq:simpz} obtained taking the \texttt{300 IEEE} topology as reference grid $G_0$, using either the closed-form expression of the free energy or the EE algorithm.}
\label{tab:parcomparison}
\end{table}

We show now in \cref{tab:abla} that, in fact, the graphs generated by the simpler model cannot, on average, capture the transitivity of the real power grids. We compare the average results obtained with the model specification as in \eqref{eq:hams} with the one obtained with the specification \eqref{eq:hamf}, once again using the \texttt{300 IEEE} topology as a reference grid $G_0$. The simpler model produces graphs with an average degree per bus-type close to the desired ones, but all the transitivity metrics (\ie clustering coefficient and the related triangle counts) are not consistent with the observed ones.

\begin{table}[H]
\centering
\begin{tabular}{p{4.7cm}|c|c|c|c|c|c|c|c|c}
\toprule
Model & $ m $ & $\langle k_P \rangle $ & $\langle k_L \rangle $ & $\langle k_I \rangle $ & $T_1$ & $T_2$ & APL & $\lambda_2$ & $C$ \\ 
\midrule
Reference grid \texttt{300 IEEE} & 409 & 1.95 & 3.00 & 2.15 & 34 & 14 & 9.93 & 0.009 & 0.085 \\

\midrule
ERG model without $T_1$ and $T_2$ & 407 & 1.88 & 3.00 & 2.13 & 2.7 & 0.08 & 6.00 & 0.067 & 0.005 \\

\midrule
ERG model with $T_1$ and $T_2$ & 413 & 1.93 & 3.06 & 2.16 & 35 & 15 & 6.44 & 0.049 & 0.076 \\

\bottomrule
\end{tabular}
\caption[Comparison of ERG models with the corresponding actual values in the reference grid $G_0$.]{Comparison of the average values of the topological properties of interest between the simpler model arising from \eqref{eq:hams} and the one which includes the triangles arising from \eqref{eq:hamf} w.r.t. the real values observed for the \texttt{300 IEEE} topology.}
\label{tab:abla}
\end{table}

\section{Conclusions and future work}
\label{sec:conclusions}
Exponential Random Graphs are some of the most well-studied random graph models. However, to the best of our knowledge, these have never been used in the context of the generation of synthetic power grids. The advantage of using ERG models in this domain is that desirable local and global topological properties can be introduced simultaneously as soft constraints in the sampling density. The proposed methodology allows to efficiently generate large and diverse samples of synthetic grids starting from a real power grid given as input. In this work, we introduced the ERG mathematical framework, presented an approach to tackle some crucial technical issues, including a general estimation procedure for the parameters which allows for even more flexibility while using the ERG models, and presented some numerical results for the synthetic power grid generation problem. 

Since only a single reference grid is needed as an input to generate an ensemble of weakly correlated topologies that share on average the desired properties, our method can be efficiently used to produce a large collection of realistic grids. These grids in turn, following the rationale of the power grids from a network of networks perspective proposed in \cite{Halappanavar2015}, can be used as \vrg{fragments}, \ie building blocks to be reassembled to generate larger and more heterogeneous grids, as done in \cite{Young2018, Young2018b, Huang2018}. Furthermore, this approach could be used in an iterative fashion, using the reassembled grids as a starting point of reference for \cref{alg:final}.

Possible future research directions include investigating more in depth the dynamics of the chain generated in \cref{alg:final} so that we can identify how well our method scales as the size of the network increases. This would help us identify the mixing time, and thus an appropriate burn-in time $t_B$, in a more theoretically sound way. Finally, since our scheme is rather general, it would be interesting to investigate its application in contexts different from the analysis of power grids. 


\printbibliography

\appendix

\section{Proof of the closed-form expression of the simpler model}
\label{app:second}
\begin{thm}\label{thm:main_thm}
Consider a simple, undirected, unweighted graph $G=(V,E)$, and let $A = \big(A_{ij}\big)$ be the symmetric adjacency matrix associated with $G$. Assume that a partition $E_1,\dots, E_K$ of all possible node pairs $V \times V$ is given, and let $A_1, \dots, A_K$ be the corresponding block decomposition of the adjacency matrix $A$ such that $A_k = \big(A_{ij}\big)_{(i,j) \in E_k}$.
Consider the Exponential Random Graph model on $G \in \mathcal{G}_n$ with Hamiltonian
\begin{equation}\label{eq:ham1}
    \calH_\bb(G) = \sum_{k=1}^{K} \beta_k |E_k(G)|, \quad G \in \mathcal{G}_n,
\end{equation}
where $|E_k(G)|$ indicates the number of nonzero entries in the submatrix $A_k$ for the graph $G$, that is how many edges from the pairs in the subset $E_k$ graph $G$ actually has. Then, the partition function $Z_\bb$ associated with $\calH_\bb$ is of the form
\begin{equation}\label{eq:part1}
    Z_\bb = \prod_{k=1}^{K} (1+ e^{\beta_k})^{M(E_k)},
\end{equation}
where $M(E_k)$ is the total number of entries in the submatrix $A_k$.
\end{thm}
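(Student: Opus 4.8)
The plan is to exploit the fact that the Hamiltonian \eqref{eq:ham1} is \emph{additive} over the blocks $E_1,\dots,E_K$, which are disjoint by assumption, so that the sum defining $Z_\bb$ factors into a product of independent one-variable sums. Concretely, I would first identify a graph $G\in\calG_n$ with the tuple of independent binary coordinates recording, for each possible edge slot, whether that edge is present; under this identification, letting $G$ range over $\calG_n$ is the same as letting each coordinate range freely over $\{0,1\}$, and $|E_k(G)|=\sum_{(i,j)\in E_k}A_{ij}$ counts how many of the slots in block $E_k$ are ``on''. Hence
\[
  e^{\calH_\bb(G)}=\prod_{k=1}^{K}e^{\beta_k|E_k(G)|}=\prod_{k=1}^{K}\prod_{(i,j)\in E_k}e^{\beta_k A_{ij}} .
\]

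Next I would sum this product over all graphs slot by slot. Since the coordinates $A_{ij}$ vary independently over $\{0,1\}$, the sum of the product is the product of the one-variable sums, and each such sum is $\sum_{A_{ij}\in\{0,1\}}e^{\beta_k A_{ij}}=1+e^{\beta_k}$:
\begin{align*}
  Z_\bb
  &=\sum_{G\in\calG_n}\prod_{k=1}^{K}\prod_{(i,j)\in E_k}e^{\beta_k A_{ij}}
  =\prod_{k=1}^{K}\prod_{(i,j)\in E_k}\Big(\sum_{A_{ij}\in\{0,1\}}e^{\beta_k A_{ij}}\Big)\\
  &=\prod_{k=1}^{K}\prod_{(i,j)\in E_k}\bigl(1+e^{\beta_k}\bigr)
  =\prod_{k=1}^{K}\bigl(1+e^{\beta_k}\bigr)^{M(E_k)},
\end{align*}
where in the last step the inner product over $(i,j)\in E_k$ has exactly $M(E_k)$ identical factors, $M(E_k)$ being by definition the number of slots (entries) in block $E_k$. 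This is \eqref{eq:part1}; taking logarithms then yields the free-energy formula \eqref{eq:simpf} at once, and differentiating in $\beta_k$ recovers the calibration equations of the simpler model.

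The computation itself is elementary; the step that I expect to need genuine care is the bookkeeping in the identification of $\calG_n$ with a product of \emph{independent} binary coordinates. One must make sure the partition $E_1,\dots,E_K$ is a partition of the true degrees of freedom of a simple undirected graph — i.e.\ of the strict upper triangle of $A$ (no diagonal self-loops, and $A_{ij}$, $A_{ji}$ not double-counted) — and that $|E_k(G)|$ and $M(E_k)$ are measured under the \emph{same} convention, so that the multiplicity appearing in the factorization is precisely $M(E_k)$ rather than, say, twice it. Once a consistent convention is fixed, the factorization above is exact, and for the concrete bus-type partition of \eqref{eq:hams} the multiplicities $M(E_k)$ reduce to the elementary count in \eqref{eq:maxfunc}.
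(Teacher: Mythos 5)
Your proof is correct and follows essentially the same route as the paper's: factor $e^{\calH_\bb(G)}$ over the independent binary edge slots, exchange the sum over $\calG_n$ with the product, and collect $M(E_k)$ identical factors $1+e^{\beta_k}$ per block. Your closing remark about fixing a consistent convention for the edge slots (strict upper triangle versus ordered pairs) is a worthwhile point of care that the paper glosses over, but it does not change the argument.
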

\begin{proof}
In view of the structure of the Hamiltonian \eqref{eq:ham1}, the corresponding partition function is
\begin{equation}\label{eq:zham}
    Z_\bb = \sum_{G \in \mathcal{G}_n} e^{\sum_{k=1}^{K} \beta_k |E_k|}.
\end{equation}
Since we sum over all possible graphs $G \in \mathcal{G}_n$, each block $A_k$ of the adjacency matrix can be considered as an independent matrix and $E_k$ represents the number of edges in the portion of the graph associated with $A_k$. Thus, we can rewrite \eqref{eq:zham} as
\begin{equation*}
    Z_\bb = \sum_{G \in \mathcal{G}_n} \prod_{k=1}^{K} \prod_{A_{ij}\in A_k} e^{\beta_k A_{ij}},
\end{equation*}
Note that, since we consider undirected and unweighted graphs the entries of the adjacency matrix can only take values $0$ or $1$, \ie $A_{ij} \in \{0,1\}$ for all $i, j$.
As described in \cite{Fronczak2012}, since all the considered observables are functions of the entries $A_{ij}$ of the adjacency matrix $A$, we can sum over all possible graphs $G \in \mathcal{G}_n$ by summing over all possible combinations of values for each $A_{ij}$. By doing so, we obtain
\begin{equation*}
    Z_\bb = \prod_{k=1}^K \prod_{A_{ij}\in A_k} (1+e^{\beta_k}) = \prod_{k=1}^K (1+ e^{\beta_k})^{M(E_k)}. \qedhere
\end{equation*}
\end{proof}

\section{List of analyzed grids}\label{app:first}

The transmission grids used in this study are the ones collected and described in \cite{Sogol2019}, which were made available in a \textit{MATPOWER testcase} format \cite{Zimmerman2020} in the library \cite{pg_git} and that we have parsed using the library \cite{ll_git}.
For each available grid, we derive a simple, undirected, and unweighted graph object. The bus types have been inferred as follows: the generators are retrieved directly from the generator list available in the MATPOWER file, the interconnections are the buses with no power generation or demand, while the remaining nodes have all been labeled as loads. In \cref{tab:allgrids} we present some topological properties used during the analysis for all connected power grids.

\begin{table}[H]
\centering
\begin{adjustbox}{width=0.99\textwidth}
\begin{tabular}{lllllllllllllll}
\toprule
                           name &  \#bus & \#branches & $\langle k \rangle$ & $\langle k_P \rangle$ & $\langle k_L \rangle$ & $\langle k_I \rangle$ & \% of $P$ & \% of $L$ & \% of $I$ & \#triangles & \# 2-triangles & $\lambda_2$ &      $C$ &     APL \\
\midrule
        case30\_as &    30 &      41.0 &  2.73 &     2.0 &    2.77 &     4.5 &     20\% &     73\% &      7\% &          6 &             0 &      0.212 &  0.235 &   3.306 \\
      case30\_ieee &    30 &      41.0 &  2.73 &     2.0 &    2.77 &     4.5 &     20\% &     73\% &      7\% &          6 &             0 &      0.212 &  0.235 &   3.306 \\
      case39\_epri &    39 &      46.0 &   2.3 &     1.1 &    2.79 &     0.0 &     25\% &     72\% &      2\% &          1 &             0 &      0.076 &  0.038 &   4.749 \\
      case57\_ieee &    57 &      78.0 &  2.74 &    3.86 &    2.62 &     2.0 &     12\% &     82\% &      5\% &          9 &             2 &      0.088 &  0.122 &   4.954 \\
  case73\_ieee\_rts &    73 &     108.0 &  2.96 &    2.88 &    3.11 &     2.0 &     45\% &     51\% &      4\% &          3 &             0 &       0.04 &  0.025 &   5.983 \\
     case118\_ieee &   118 &     179.0 &  3.03 &    3.56 &     2.5 &    3.25 &     46\% &     47\% &      7\% &         23 &            11 &      0.027 &  0.165 &   6.309 \\
      case179\_goc &   179 &     222.0 &  2.48 &     1.0 &    2.45 &    3.65 &     16\% &     61\% &     22\% &         19 &            13 &      0.007 &  0.089 &  12.382 \\
    case200\_activ &   200 &     245.0 &  2.45 &     1.0 &    2.91 &    3.25 &     24\% &     74\% &      2\% &         13 &             4 &      0.023 &  0.037 &   8.223 \\
    case240\_pserc &   240 &     348.0 &  2.89 &     1.0 &    3.44 &     0.0 &     22\% &     78\% &      0\% &         49 &            33 &      0.017 &  0.114 &   8.824 \\
     case300\_ieee &   300 &     409.0 &  2.73 &    1.96 &    3.06 &    2.15 &     23\% &     68\% &      9\% &         34 &            14 &      0.009 &  0.086 &   9.935 \\
      case500\_goc &   500 &     651.0 &   2.6 &    1.38 &    3.22 &     2.1 &     30\% &     64\% &      6\% &         52 &            24 &      0.007 &  0.061 &    9.75 \\
     case588\_sdet &   588 &     677.0 &   2.3 &    2.54 &    2.18 &    2.87 &     21\% &     72\% &      7\% &          7 &             0 &      0.004 &  0.011 &  13.495 \\
      case793\_goc &   793 &     904.0 &  2.28 &    2.61 &    2.15 &    2.48 &     22\% &     70\% &      8\% &          9 &             0 &      0.003 &   0.01 &  15.331 \\
  case1354\_pegase &  1354 &    1710.0 &  2.53 &    2.58 &    1.08 &    2.53 &     19\% &      1\% &     80\% &         87 &            14 &      0.005 &  0.056 &  11.151 \\
     case2000\_goc &  2000 &    2810.0 &  2.81 &    1.15 &    3.08 &    2.78 &     13\% &     81\% &      6\% &        232 &           129 &      0.001 &  0.063 &  16.363 \\
     case2312\_goc &  2312 &    2830.0 &  2.45 &    2.01 &    2.51 &    2.71 &     18\% &     70\% &     12\% &         52 &            11 &      0.004 &  0.017 &  15.009 \\
     case2383wp\_k &  2383 &    2886.0 &  2.42 &    3.01 &    2.33 &     0.0 &     14\% &     86\% &      0\% &         26 &             3 &      0.003 &  0.009 &  12.759 \\
     case2736sp\_k &  2736 &    3495.0 &  2.55 &    3.44 &    2.45 &     3.0 &     10\% &     90\% &      0\% &         56 &             5 &      0.003 &  0.014 &  13.399 \\
    case2737sop\_k &  2737 &    3497.0 &  2.56 &    3.56 &    2.45 &     3.5 &      9\% &     91\% &      0\% &         57 &             5 &      0.003 &  0.014 &  13.397 \\
     case2742\_goc &  2742 &    4005.0 &  2.92 &    3.67 &    2.91 &    2.07 &      2\% &     98\% &      1\% &        152 &             8 &      0.003 &  0.033 &  15.979 \\
    case2746wop\_k &  2746 &    3505.0 &  2.55 &     3.2 &    2.45 &     2.8 &     14\% &     86\% &      0\% &         59 &             5 &      0.003 &  0.014 &  13.317 \\
     case2746wp\_k &  2746 &    3505.0 &  2.55 &    3.16 &    2.45 &     0.0 &     14\% &     86\% &      0\% &         58 &             5 &      0.003 &  0.014 &  13.302 \\
     case3012wp\_k &  3012 &    3566.0 &  2.37 &    2.96 &    2.29 &    1.22 &     12\% &     88\% &      0\% &         24 &             1 &      0.002 &   0.01 &  14.529 \\
     case3120sp\_k &  3120 &    3684.0 &  2.36 &    2.92 &     2.3 &    1.22 &     11\% &     89\% &      0\% &         25 &             1 &      0.003 &  0.009 &  14.262 \\
     case3970\_goc &  3970 &    5712.0 &  2.88 &    3.34 &    2.86 &    2.54 &      3\% &     97\% &      0\% &        156 &             1 &      0.002 &  0.026 &  17.206 \\
     case4020\_goc &  4020 &    6089.0 &  3.03 &    3.74 &    3.02 &    2.49 &      2\% &     97\% &      1\% &        248 &             3 &      0.003 &  0.038 &  14.679 \\
     case4601\_goc &  4601 &    6305.0 &  2.74 &    3.33 &    2.72 &    3.24 &      3\% &     97\% &      0\% &        117 &             1 &      0.001 &  0.017 &  17.409 \\
     case4619\_goc &  4619 &    7337.0 &  3.18 &    3.78 &    3.19 &     2.5 &      3\% &     93\% &      4\% &        492 &            50 &      0.001 &  0.067 &  18.015 \\
    case4661\_sdet &  4661 &    5751.0 &  2.47 &    2.45 &    2.43 &    2.78 &     20\% &     70\% &     10\% &         92 &            12 &      0.004 &  0.019 &  15.671 \\
     case4837\_goc &  4837 &    6622.0 &  2.74 &    3.22 &    2.72 &    2.54 &      3\% &     96\% &      1\% &        250 &             3 &      0.001 &  0.048 &   23.93 \\
     case4917\_goc &  4917 &    6187.0 &  2.52 &    1.85 &    2.71 &    2.85 &     25\% &     65\% &     10\% &        240 &           147 &      0.001 &  0.035 &  21.466 \\
case6468\_rte\_\_api &  6468 &    8065.0 &  2.49 &    1.83 &     2.6 &    3.27 &     15\% &     84\% &      1\% &        351 &            79 &      0.002 &  0.051 &  14.961 \\
     case6468\_rte &  6468 &    8065.0 &  2.49 &    1.83 &     2.6 &    3.27 &     15\% &     84\% &      1\% &        351 &            79 &      0.002 &  0.051 &  14.961 \\
     case6470\_rte &  6470 &    8066.0 &  2.49 &    1.81 &    2.59 &    4.82 &     15\% &     84\% &      1\% &        352 &            80 &      0.002 &  0.052 &  14.985 \\
     case6495\_rte &  6495 &    8084.0 &  2.49 &    1.78 &    2.56 &    5.95 &     16\% &     83\% &      1\% &        352 &            80 &      0.002 &  0.052 &  14.952 \\
     case6515\_rte &  6515 &    8104.0 &  2.49 &    1.77 &    2.57 &     5.9 &     16\% &     83\% &      1\% &        352 &            80 &      0.002 &  0.051 &  14.952 \\
     case9591\_goc &  9591 &   14042.0 &  2.93 &    3.74 &    2.92 &    2.43 &      2\% &     97\% &      1\% &        557 &            30 &      0.001 &  0.034 &  17.107 \\
    case10000\_goc & 10000 &   12742.0 &  2.55 &    1.33 &    2.69 &    3.57 &     14\% &     81\% &      5\% &        606 &           334 &      0.001 &  0.031 &  23.273 \\
    case10480\_goc & 10480 &   16107.0 &  3.07 &    3.74 &    3.05 &    3.02 &      3\% &     94\% &      2\% &        840 &            81 &      0.001 &  0.053 &  18.606 \\
    case19402\_goc & 19402 &   29751.0 &  3.07 &    3.75 &    2.99 &    3.46 &      2\% &     85\% &     13\% &       1152 &            89 &        0.0 &   0.04 &  20.883 \\
    case24464\_goc & 24464 &   34693.0 &  2.84 &    3.33 &     2.9 &     2.8 &      3\% &     21\% &     76\% &       1346 &           102 &        0.0 &  0.048 &  35.994 \\
    case30000\_goc & 30000 &   35233.0 &  2.35 &    1.54 &    2.36 &    3.75 &      8\% &     89\% &      4\% &        824 &           448 &        0.0 &  0.016 &  34.085 \\
\bottomrule
\end{tabular}
\end{adjustbox}
\caption[Available grids after parsing]{Investigated topological properties of the analyzed grids from \cite{Sogol2019}.}
\label{tab:allgrids}
\end{table}
\end{document}